\documentclass[10pt,journal,a4paper]{IEEEtran}
\usepackage{amssymb,amsmath}
\usepackage{cite}
\usepackage{graphicx,subfigure}
\usepackage{psfrag}
\usepackage{url}
\usepackage[latin1]{inputenc}
\usepackage[absolute,overlay]{textpos}

\usepackage{pgf}
\usepackage[latin1]{inputenc}
\usepackage{algorithm}
\usepackage{algpseudocode}

\usepackage{amsthm}

\newtheorem{lemma}{Lemma}
\newtheorem{theorem}{Theorem}

\begin{document}

\title{Rate Control for Wireless-Powered Communication Network with Reliability and Delay Constraints}
\author{
	\IEEEauthorblockN{	Onel L. Alcaraz López, Student Member, IEEE, 
		Hirley Alves, Member, IEEE, 
		Richard Demo Souza, Senior Member, IEEE,
		Samuel Montejo-Sánchez, Member, IEEE, and
		Evelio M. García Fernández, Member, IEEE
	}
	\thanks{O.L.A. López, H. Alves are with the Centre for Wireless Communications (CWC), University of Oulu, Finland. \{onel.alcarazlopez,hirley.alves\}@oulu.fi.}
	\thanks{R.D. Souza is with Federal University of Santa Catarina (UFSC), Florianópolis, Brazil. \{richard.demo@ufsc.br\}.}	
	\thanks{S. Montejo-Sánchez is with Programa Institucional de Fomento a la I+D+i, Universidad Tecnológica Metropolitana, Santiago, Chile. \{smontejo@utem.cl\}.}	
	\thanks{E.M.G. Fernández is with Federal University of Paraná (UFPR), Curitiba, Brazil. \{evelio@ufpr.br\}.}
	\thanks{This work is partially supported by Academy of Finland (Aka) (Grants n.303532, n.307492, n.318927 (6Genesis Flagship)), CAPES (Brazilian Agency for Higher Education) (project PrInt CAPES-UFSC ``Automation 4.0''), as well as FONDECYT Postdoctoral Grant n.3170021.} 	
}					
					
\maketitle

\begin{abstract}	
	 We consider a two-phase Wireless-Powered Communication Network under Nakagami-m fading, where a wireless energy transfer process first powers a sensor node that then uses such energy to transmit its data in the wireless information transmission phase. 	 
	 We explore a fixed transmit rate scheme designed to cope with the reliability and delay constraints of the system while attaining closed-form approximations for the optimum wireless energy transfer and wireless information transmission blocklength. Then, a more-elaborate rate control strategy exploiting the readily available battery charge information is proposed and the results evidence its outstanding performance  when compared with the fixed transmit rate, for which no battery charge information is available. It even reaches an average rate performance close to that of an ideal scheme requiring full Channel State Information at transmitter side. 
	  Numerical results show the positive impact of a greater number of antennas at the destination, and evidence that the greater the reliability constraints, the smaller the message sizes on average, and the smaller the optimum information blocklengths. Finally, we corroborate the appropriateness of using the asymptotic blocklength formulation as an approximation of the non-asymptotic finite blocklength results.	
\end{abstract}
\begin{IEEEkeywords}
	WPCN, reliability and delay constraints, rate control, finite blocklength.
\end{IEEEkeywords}
\section{Introduction}\label{int}
Very recently, energy harvesting (EH) has emerged as a promising technology to achieve a sustained and low-cost operation of low-power devices such as sensors or tiny actuators. Apart from the conventional EH sources \cite{Sudevalayam.2011} such as solar, wind, kinetic, or other ambient energy source, the radio-frequency (RF) signals are very attractive because they can carry energy and information simultaneously \cite{Tandon2.2016}, which enables energy constrained nodes to harvest energy and receive information. Notice that RF EH, also known as Wireless Energy Transfer (WET), is envisioned to be of utmost importance in scenarios where replacing or recharging the batteries may be dangerous, as in a toxic environment, or even impossible, as in sensors implanted in human bodies. Those scenarios will be also present in future communication paradigms such as the Internet of Things (IoT) \cite{Chu.2018}, where powering a potentially massive number of devices will be a major challenge.

Wireless Powered Communication Networks (WPCN) is already a well-addressed topic in the literature. Readers can find in \cite{Lu.2015} an overview of WPCNs including system architecture, WET techniques, and existing applications. 
As shown in \cite{Ju.2014} for the case of throughput maximization, WPCN can even achieve a performance comparable to that of conventional (non-WET) networks when intelligent policies are applied. Some of the main strategies that have been considered in the scientific literature are relay-assisted \cite{Krikidis.2012,Ye.2019,Chen.2015,Krikidis.2014}, Automatic Repeat-reQuest (ARQ) \cite{Witt.2014,Mao.2016}, and power control \cite{Liu.2013,Chu.2018,Liu.2017,Isikman.2017,KhanYazdan.2017,Wu.2018} mechanisms.
Specifically, the work in \cite{Krikidis.2012} deals with RF energy transfer in a three-node cooperative network, and focuses on the optimum switching between EH and data relaying, while the outage probability of a dual-hop decode-and-forward (DF) relay system utilizing the power-splitting EH protocol is investigated in \cite{Ye.2019}. Authors in \cite{Chen.2015} also consider a cooperative WPCN but this time the source and relay nodes have no embedded energy supply and are wirelessly powered by an access point (AP).  The performance of a large-scale WPCN with random number of transmitter-receiver pairs and potential relays that are randomly distributed into the network, is characterized in \cite{Krikidis.2014} by using
stochastic-geometry tools. In the latter two cases authors show the benefits of cooperation in terms of throughput and outage probability, respectively. An investigation on the usage of the time-switching protocol when hybrid ARQ schemes are used for information transmission is presented in \cite{Witt.2014}, while results demonstrate the improvement of the system performance for high Signal-to-Noise Ratio (SNR). Additionally, a novel ARQ protocol for EH receivers is proposed in \cite{Mao.2016} by allowing the ACK feedback to be adapted based upon the receiver's EH state. Therein, authors derive optimal reception policies including the sampling, decoding and feedback strategies, for different ARQ protocols and assuming several power consumption sources. 
Regarding power control, authors in \cite{Liu.2013} study a WPCN that employs  dynamic power splitting. Under a point-to-point flat-fading channel setup, they resort to the optimal power splitting rule at the receiver based on the  Channel State Information (CSI) to optimize the rate-energy
performance trade-off. Meanwhile, \cite{Liu.2017} investigates the optimal energy beamforming and time assignment in WPCN for smart cities, and \cite{Isikman.2017} proposes a low-complexity solution, called fixed threshold transmission (FTT) by setting a transmit power threshold to determine whether transmission takes place or not. On the other hand, the energy efficiency-optimal downlink transmit power for a massive multiple-input multiple-output (MIMO)  base station (BS) using WET to charge single-antenna EH users, is derived in \cite{KhanYazdan.2017}. Therein, authors use a scalable model for the devices' circuit power consumption and show that it is indeed energy efficient to operate the system in the massive MIMO regime. Finally, a time-division multiple access (TDMA)-based WPCN was shown in \cite{Wu.2018} to be more spectral-efficient than its non-orthogonal multiple access (NOMA) counterpart after optimizing the transmit power and time allocation in both cases. Specifically, the scenario consists of a power beacon (PB) powering wirelessly a set of devices, which in turn use such harvested energy for feeding their circuits and transmitting information to a receiving AP, while all the devices are single-antenna.

Aforementioned research works, and many others related, design and/or evaluate strategies to maximize the probability of transmission, the  throughput, or the energy efficiency, or to minimize the transmitted power, or the outage probability. However, as enabler of IoT paradigm, WPCNs find a natural application in Machine-Type Communication (MTC) use cases of coming 5th Generation (5G) of wireless systems, where some stringent requirements in reliability and delay may be present, as is the case of ultra-reliable MTC (uMTC) scenarios \cite{Durisi.2016}.
Critical connections for industrial automation, smart grids, reliable wireless coordination
among vehicles, and others, are some examples of uMTC scenarios \cite{Popovski.2014} with general delay constraints less than 10ms and error probability requirements not greater than $10^{-3}$. The interplay between reliability/delay makes physical layer design of uMTC very complex and challenging \cite{Hyoungju.2017}, and WPCN systems need to be designed properly to meet such demands. 
\subsection{Related Works}\label{A}
Some recent works \cite{Lopez2.2017,Haghifam.2016,Lopez3.2017,Lopez4.2018,Khan.2017,Lopez.2018,Tandon.2016,Dabirnia.2016,Makki.2016} deal with those issues by considering short packets (stringent delay constraints) and/or ultra-reliability. In \cite{Lopez2.2017} we analyze and optimize a single-hop wireless system with energy transfer in the downlink and information transfer in the uplink, under quasi-static Nakagami-m fading in uMTC scenarios. The results demonstrate that there
is an optimum number of channel uses for both energy and information transfer for a given message length. 
Cooperative WPCNs are considered in \cite{Haghifam.2016,Lopez3.2017,Lopez4.2018}, where the impact of an amplify-and-forward (AF) \cite{Haghifam.2016} and DF \cite{Lopez3.2017} relay-assisted communication setup without direct link
is evaluated, and  power consumption sources beyond data transmission power, as well as imperfect CSI and a direct link, are incorporated in \cite{Lopez4.2018} for the DF setup. 
A save-then-transmit protocol is investigated in \cite{Khan.2017}  where authors address the scenario in which one or more wireless PBs use WET to charge a node over fading channels, and then it attempts to communicate with another receiver over a noisy channel and using a finite number of channel uses. The system performance is characterized using metrics, such as the energy supply probability at the transmitter, and the achievable rate at the receiver.
In \cite{Lopez.2018} we study transmission strategies for scenarios with/without energy accumulation between transmission rounds, where a power control protocol for the energy accumulation scenario is proposed in order to improve the reliability performance under the given delay constraints. On the other
hand, subblock energy-constrained codes are investigated in \cite{Tandon.2016}, and a sufficient condition on the subblock length to avoid energy outage at the receiver is provided. Meanwhile, authors in \cite{Dabirnia.2016} propose using concatenation of a nonlinear trellis code (NLTC) with an outer low-density parity-check (LDPC) code  taking into account both energy transmission and error rate requirements, and show that the designed codes operate at $\sim\! 0.8$ dB away from the information theoretic limits.
Besides, retransmission protocols, in both energy and information transmission phases, are implemented in \cite{Makki.2016} to reduce the outage probability compared to open-loop communication.  Nevertheless, none of these works deal with rate allocation strategies for WPCNs under reliability and delay constraints. Notice that in WPCNs with \textit{elastic} applications, in the sense that devices can transmit data over a wide range of data rates, the transmission rate is a degree of freedom that should be properly exploited.
\subsection{Contributions and organization of this paper}\label{B}
Herein we aim at filling the above gap in the literature by addressing the rate control problem in such WPCNs with reliability and delay constraints. Specifically, our goal is to maximize the instantaneous message size such that requirements of latency and reliability are met.
For a two-phase wireless system, where first a WET process powers a low-power sensor node that uses that energy to transmit its data in the Wireless Information Transmission (WIT) phase, we propose two rate control strategies: i) fixed transmit rate (FTR) and ii) transmit rate with known state of charge (KSC). We compare both with the ideal scheme that requires full CSI at transmitter side (fCSI). The main contributions of this work can be summarized as follows:
\begin{itemize}
	\item We show the advantages of deciding on the transmit rate based on the battery charge, since it allows 	considerable rate improvements when compared to 	the fixed rate scheme, achieving a performance close to that of the ideal scheme where full CSI is also available at the transmitter side;	
	\item Closed-form approximations for the attainable average message size of the proposed schemes  are obtained, and also for the optimum WET and WIT blocklength when the fixed transmit rate scheme is used;
    \item Results show the positive impact of a greater number of antennas at the destination, and evidence that the more stringent the reliability constraints, the smaller the message sizes on average, and the smaller the optimum WIT blocklength. The inverse relation between the optimum WIT blocklength and the reliability constraints was proved analytically for the fixed transmit scheme, but numerical results evidence that this characteristic holds also for the other schemes;    
    \item We develop the asymptotic formulation for the rate control problem under the proposed schemes. We show how to depart from these expressions to attain non-asymptotic results, while illustrating numerically the gaps of both formulations. We corroborate the appropriateness of using the asymptotic formulation as an approximation of the non-asymptotic finite blocklength results in this scenario, specially at data rates that are not low.    
\end{itemize}

Next, Section \ref{system} presents the system model and assumptions. Sections \ref{FTR}, \ref{KSC} and \ref{fCSI} discuss the rate control schemes, FTR, KSC and fCSI, respectively, designed to meet the given reliability and delay constraints. Section \ref{results} presents the numerical results, while Section \ref{conclusions} concludes the paper.
\begin{table}[!t]
	\caption{Main Symbols}
	\label{table}
	\centering
	\begin{tabular}{|p{0.9cm}p{6.9cm}|}
		\hline
		$v,n,\delta$ & WET channel uses, WIT channel uses, $\delta=n+v$ \\
		$v^*,n^*$ & Optimum values of $v$ and $n$, respectively \\
		$T_c$ & Duration of a channel use\\
		$\lambda_{xy}$ & Path loss between nodes $X$ and $Y$\\
		$h$ &  Normalized channel power gain of the link $T\rightarrow S$\\
		$g_i$ & Normalized channel power gain of the link between $S$ and   the $i$-th antenna of $D$\\
		$m_1,m_2$ & Nakagami-m shape factor of link $T\rightarrow S$ and $S\rightarrow D$, respectively\\
		$M$ & Number of antennas at $D$\\
		$\textsl{g}$ & RV denoting the normalized sum of $g_i$ for $i=1,...,M$\\
		$E$ & Energy harvested at $S$ during the WET phase\\
		$P_x$ & Transmit power of node $X$\\
		$\eta$ & Energy conversion efficiency\\
		$k$ & Message size in bits\\
		$k_0$ & Fixed minimum message size\\
		$r$ & Transmit rate, $r=k/n$\\
		$\epsilon_{\mathrm{th}}$	& Maximum allowed error probability\\
		$\sigma_{d}^2$ & Average noise power at $D$ \\
		$\gamma_i$ & Instantaneous SNR at the $i$-th antenna of $D$\\
		$\gamma$ & Instantaneous SNR after Maximum Ratio Combining (MRC)  at $D$\\
		$\psi$ & Average SNR at $D$ normalized by $M$ and the quotient $v/n$\\
		$w$ & RV denoting the product $h\textsl{g}$\\
		$\epsilon(\gamma,k,n)$ & Error probability for an Additive White Gaussian Noise (AWGN) channel\\
		$h_0,z_0$& Channel thresholds for KSC and fCSI schemes, respectively\\		
		$\epsilon_{\mathrm{th}}^*$ & Average error probability when operating above the channel  thresholds, e.g., above $h_0$ for KSC or above $z_0$ for fCSI\\
		\hline		
	\end{tabular}
\end{table}
\textbf{Notation:} $X\sim\Gamma(m,1/m)$ is a normalized gamma distributed random variable (RV) with shape factor $m$, Probability Density Function (PDF) $f_X(x)=\frac{m^m}{\Gamma(m)}x^{m-1}e^{-mx}$ and Cumulative Distribution Function (CDF) $F_X(x)=1-\frac{\Gamma(m,mx)}{\Gamma(m)}$ for $x\ge 0$. $\mathbb{P}[A]$ is the probability of event $A$, while $\mathbb{E}_z[\!\ \cdot\ \!]$ denotes expectation with respect to RV $Z$. $\operatorname{K}_t(\mathrel{\cdot})$ is the modified Bessel function of second kind and order $t$,  $\mathrm{E}_1(\mathrel{\cdot})$ is the exponential integral \cite{Gradshteyn.2014}, while $\mathcal{W}(\mathrel{\cdot})$ is the main branch of the Lambert W function \cite{Corless.1996}, which satisfies $\mathcal{W}(x)\ge-1$ for $x\in \mathcal{R}$ and it is defined in $-\tfrac{1}{e}\!\le\! x\!<\!0$. The Gaussian Q-function is denoted
as $Q(x)=\int_{x}^{\infty}\frac{1}{\sqrt{2\pi}}e^{-t^2/2}\mathrm{d}t$ and $\sup\{\cdot\}$ is the supremum operation. 
Table~\ref{table} summarizes the main symbols used throughout this paper.
\section{System Model}\label{system}
Consider the two-phase scenario in Fig.~\ref{Fig1}. First, $T$ powers $S$ during $v$ channel uses in the WET phase. Then, $S$, which may be a sensor node with very limited energy supply, uses the energy obtained in the WET phase to transmit its information to $D$ over $n$ channel uses in the WIT phase. We do not consider other power consumption sources at $S$, e.g., as sensing operations and/or receive/processing/transmit circuitry functioning as considered in \cite{Mao.2016,KhanYazdan.2017,Wu.2018,Liu.2017,Lopez4.2018}. However, notice that our derivations keep accurate in case that such additional consumption is small compared to the average energy harvested by $S$ as assumed implicitly or explicitly in \cite{Chu.2018,Ju.2014,Krikidis.2012,Ye.2019,Chen.2015,Krikidis.2014,Witt.2014,Liu.2013,Lopez2.2017,Haghifam.2016,Lopez3.2017,Khan.2017,Lopez.2018,Makki.2016,Boshkovska.2015}. Otherwise, modeling and including into the analysis other power consumption sources may be strictly necessary, as failing to satisfy the corresponding energy requirements could become the main cause of system failure, as shown for instance in \cite{Lopez4.2018} when transmitting short data messages in a cooperative scenario. Meanwhile, notice that in systems operating under stringent reliability and delay constraints, which are the focus of this work, such energy-outage events could be critical and difficult to predict/avoid, hence, deploying a very small battery, but still able of supporting the basic operation of the EH devices, could be required. In such case the harvested energy would support only the transmission operations, while once again our derivations and analysis hold. On the other hand, we assume that both, energy and data transfer, utilize the same frequency spectrum, which guarantees small form factors, but perhaps more important it allows saving spectrum and enables Simultaneous Wireless and Information Transfer (SWIPT) \cite{Krikidis.2014,Makki.2016} in the link $T\rightarrow S$, e.g., for the transmission of control messages from $T$ to $S$. 
We denote by $T_c$ the duration of a channel use, thus, the duration of an entire transmission round is $\delta T_c$, where $\delta=v+n$. 
Also, $S$ and $T$ are single antenna devices while $D$ is equipped with $M$ antennas.
The path loss between $T$ and $S$, and between $S$ and each antenna of $D$, is denoted as $\lambda_{ts}$, $\lambda_{sd}$, respectively.

\begin{figure}[t!]
	\centering
	\includegraphics[width=0.47\textwidth]{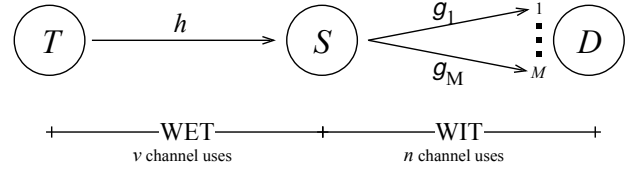}
	\caption{System model.}		
	\label{Fig1}	
\end{figure}
Nakagami-m quasi-static channels are assumed where the fading is considered to be constant and independent and identically distributed over each phase.
We consider normalized channel power gains $h\sim\Gamma(m_1,\tfrac{1}{m_1})$ and $g_i\sim\Gamma(m_2,\tfrac{1}{m_2})$, where $h$ characterizes the link $T\rightarrow S$, $g_i$ is the coefficient of the link between $S$ and the $i-$th antenna of $D$, and $m_1,m_2$ are their corresponding Nakagami-m Line-Of-Sight (LOS) factors\footnote{Usually $m_1>m_2$ since WET processes require greater LOS than WIT.}. Also, the $M$ antennas at $D$ are sufficiently separated such that $g_i,\ i=1,\cdots,M$ are independent, and spatial diversity can be fully achieved.

A typical application scenario could be that in which $T$ is a BS who powers a set of nearby low power devices including $S$, while enabling their transmissions to a sink or aggregating node $D$ with $M$ antennas. Notice that our analyses considering a single device $S$ remain valid even in a multi-user setup if the simultaneous transmissions in the WIT phase occur in different frequency sub-bands\footnote{In such case, the MRC scheme (and consequent derivations) considered in Subsection~\ref{snrD} remains valid as it can be applied after filtering to each sub-band to maximize its associated SNR}. On the other hand, $T$ could also be a  dedicated RF source deployed to provide a predictable energy supply to those devices, or an interrogator requesting information from them, for which $D=T$ and $\lambda_{ts}=\lambda_{sd}$. The latter case implies that instead of a single-antenna device, $T$ would be equipped  with $M$ antennas. In that sense we would like to point out that our derivations remain valid in setups where $T$ is equipped with $N$ antennas in two different scenarios: 
\begin{itemize}
	\item when $T$ uses Maximum Ratio Transmission (MRT) to serve a single device $S$. In such case, our derivations hold by utilizing $Nm_1$ and $NP_t$ instead of $m_1$ and $P_t$, respectively;	
	\item when $T$ uses the CSI-free Switching Antennas (SA) strategy proposed in \cite{Lopez5.2018} under which $T$ transmits with full power through one antenna at a time such that all antennas are used during a coherence block and channels are uncorrelated. Assuming equal-time allocation for each antenna the performance is equivalent to the one attained under the assumption of a single antenna at $T$ but utilizing $m_1N$ instead of $m_1$.
\end{itemize}
Notice that the SA scheme is particularly suitable for low-power multi-user setups as it does not require CSI acquisition neither other signaling procedures, and still is capable of providing an $N-$fold diversity gain in the energy supplying.
\subsection{WET Phase}\label{wet}
During each WET phase the energy harvested at $S$ is given by
\begin{align}\label{Eh}
E=\frac{\eta P_{t}h}{\lambda_{ts}} vT_c,
\end{align}
where $P_{t}$ is the transmit power of $T$ and sufficiently large such that the energy harvested from noise is negligible, while $0\!<\!\eta\!<\!1$ is the energy conversion efficiency. Notice that we are considering the simple linear EH model as in \cite{Chu.2018,Ju.2014,Krikidis.2012,Ye.2019,Chen.2015,Krikidis.2014,Witt.2014,Liu.2013,Liu.2017,Lopez2.2017,Haghifam.2016,Lopez3.2017,Lopez4.2018,Khan.2017,Makki.2016} to allow some analytical tractability and facilitate the discussions. Although the specific performance results must vary when utilizing different EH models, the functioning of the proposed rate control schemes does not change, while their relative performance is expected to be similar. In any case, the analysis of the proposed schemes under more evolved/practical EH models, such as the piecewise linear model \cite{Lopez.2018,Lopez5.2018} and/or other non-linear models \cite{Boshkovska.2015,Chen.2017}, is left for a future work. 
\subsection{WIT Phase}\label{WIT}
After the WET phase, $S$ uses the harvested energy to transmit its message $\iota\in\{1,\cdots,L\}$ using a $(n,L,P_s,\epsilon_{\mathrm{th}})$ code, which is defined as the collection of 
\begin{itemize}
	\item an encoder $\Upsilon: \{1,\cdots,L\}\mapsto \mathcal{C}^n$, which maps the message $\iota$ into a length-$n$ codeword $x_{\iota}\in\{x_1,\cdots,x_L\}$ satisfying the power constraint
	 \begin{align}\label{xj}
	 \frac{1}{n}||x_j||^2\le P_s,\ \forall j,
	 \end{align}
	 where $P_s$ is given by
	 \begin{equation}\label{Ps}
	 P_{s}=\frac{E}{nT_c}=\frac{v}{n}\frac{\eta P_t h}{\lambda_{ts}},
	 \end{equation}
	 assuming  that $S$ uses all the harvested energy in \eqref{Eh} for transmission;
	 \item a decoder $\Lambda: \mathcal{C}^n\mapsto \{1,\cdots,L\}$ satisfying the maximum error probability constraint
	 \begin{align}\label{const}
	 \mathbb{E}\big[\max\limits_{\forall j}\mathbb{P}\big[\Lambda(y)\ne J|J=j\big]\big]\le \epsilon_{\mathrm{th}},
	 \end{align}
	 where $y$ denotes the channel output induced by the transmitted codeword at the end of each WIT phase, e.g., the received signal vector, and the expectation is taken over the channel realizations $h$ and $g_i,\ \forall i$. 
\end{itemize}
The maximum message size is then
\begin{align}\label{k}
 k=\sup\{\log_2 L: \exists (n,L,P_s,\epsilon_{\mathrm{th}})\ \mathrm{code} \}\ \mathrm{(bits)}.
\end{align}
Notice that the value of $k$ in \eqref{k} is largely influenced by the channel conditions as $y$ in \eqref{const} depends on them. Therefore, we next focus on characterizing the SNR at the destination.
\subsection{SNR at $D$}\label{snrD}
The instantaneous SNR at antenna $i$ of $D$ is
	\begin{equation}\label{SNR}
	\gamma_i=\frac{P_{s}g_i}{\lambda_{sd}\sigma_{d}^2}=\frac{v}{n}\psi h g_i,
	\end{equation}
	where $P_s$ is given in \eqref{Ps}, $\sigma_{d}^2$ is the average noise power at $D$, and
	\begin{align}\label{psi}
	\psi=\frac{\eta P_t}{\lambda_{ts}\lambda_{sd}\sigma_{d}^2}
	\end{align}
	is the average SNR at $D$ normalized by the number of antennas, $M$, and the quotient of the WET and WIT blocklengths, $v/n$. 
	Notice that for a fixed $\delta$, by increasing $v$ such that $S$ collects more energy, $S$ can increase its transmit power and consequently the SNR at each antenna of $D$ improves, but that leads to a decrease in $n$, and therefore an increase in the coding rate $r=k/n$ which degrades the reliability of the system. This is a well known trade-off of WPCNs. 
	
	Finally, perfect CSI at $D$ is assumed in the decoding after the WIT phase\footnote{CSI acquisition in an energy-limited setup is not trivial and including the effect of imperfect CSI and associated costs would demand a more elaborated mathematical analysis that is out of the scope of this work. However, notice that when channels remain constant over multiple transmission rounds, which should be the case in most scenarios when operating with short blocklengths, the cost of CSI acquisition can be negligible \cite{Lopez2.2017}.}, and $D$ uses MRC to combine the arriving information at each antenna. Therefore, the overall SNR seen is
	\begin{align}
	\gamma&=\sum_{i=1}^{M}\gamma_i=\frac{v}{n}\psi h\sum_{i=1}^{M}g_i=\frac{v}{n}M\psi h\textsl{g} \stackrel{(a)}{=}\frac{v}{n}M\psi w, \label{gam}
	\end{align}
	where $\textsl{g}=\frac{1}{M}\sum_{i=1}^{M}g_i\sim\Gamma(m_2M,\tfrac{1}{m_2M})$.
	Notice that $(a)$ comes from defining $w=h\textsl{g}$, whose PDF and CDF are given in the following result.
	\begin{lemma}\label{lem_1}
		The PDF and CDF of $W$ are given by
		\begin{align}
		f_W(w)&=\frac{2(m_1m_2M)^{\frac{m_2M+m_1}{2}}w^{\frac{m_2M+m_1}{2}-1}}{\Gamma(m_2M)\Gamma(m_1)}\times\nonumber\\
		&\qquad\qquad\qquad\times\operatorname{K}_{m_2M-m_1}\big(2\sqrt{m_1m_2Mw}\big),\label{fW}\\
		F_W(w)&=\frac{4\big(m_1m_2Mw)^{\frac{m_2M+m_1}{2}}}{\Gamma(m_2M)\Gamma(m_1)}\int\limits_{0}^{1}x^{m_2M+m_1-1}\times\nonumber\\
		&\qquad\qquad\times\operatorname{K}_{m_2M-m_1}\big(2\sqrt{m_1m_2Mw}x\big)\mathrm{d}x, \label{FW}\\
		&=1-\frac{2(Mw)^{\frac{M}{2}}}{(M-1)!}\operatorname{K}_M(2\sqrt{Mw}),\label{FW1}
		\end{align}
		for $w\ge 0$, where \eqref{FW1} holds only for Rayleigh fading scenarios, e.g., $m_1=m_2=1$.
	\end{lemma}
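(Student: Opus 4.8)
The plan is to treat $W=h\textsl{g}$ as a product of two independent normalized gamma RVs and apply the standard product-distribution (Mellin-convolution) formula. Since $h\sim\Gamma(m_1,1/m_1)$ and $\textsl{g}\sim\Gamma(m_2M,1/(m_2M))$ are independent, I would write
\begin{align}
f_W(w)=\int_0^\infty f_h\!\left(\frac{w}{g}\right)f_{\textsl{g}}(g)\,\frac{\mathrm{d}g}{g},\nonumber
\end{align}
substitute the two gamma PDFs from the Notation paragraph, and factor out every term independent of the integration variable. This collapses the problem to evaluating a single integral of the form $\int_0^\infty g^{m_2M-m_1-1}e^{-m_1w/g-m_2Mg}\mathrm{d}g$.

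The key analytical step is to recognize this integral as an integral representation of the modified Bessel function of the second kind, namely $\int_0^\infty x^{\nu-1}e^{-\beta/x-\gamma x}\mathrm{d}x=2(\beta/\gamma)^{\nu/2}\operatorname{K}_\nu(2\sqrt{\beta\gamma})$ from \cite{Gradshteyn.2014}, applied with $\nu=m_2M-m_1$, $\beta=m_1w$ and $\gamma=m_2M$. Substituting back and carefully consolidating the powers of $m_1$, $m_2M$ and $w$ --- in particular verifying that the prefactor simplifies to $(m_1m_2M)^{(m_2M+m_1)/2}$ and that the exponent of $w$ reduces to $(m_2M+m_1)/2-1$ --- yields exactly \eqref{fW}.

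For the CDF, I would integrate \eqref{fW} over $[0,w]$ and apply the change of variable $t=wx^2$, which rescales the Bessel argument linearly in $x$ and maps the outer integral onto the fixed domain $[0,1]$, producing \eqref{FW}; for general $m_1,m_2$ this remaining integral admits no further elementary closed form, so \eqref{FW} is the final answer in that case. For the Rayleigh special case $m_1=m_2=1$ it is cleaner to compute the CDF directly by conditioning on $\textsl{g}$: since $h$ is then a unit-mean exponential, $F_W(w)=\int_0^\infty(1-e^{-w/g})f_{\textsl{g}}(g)\mathrm{d}g=1-\frac{M^M}{(M-1)!}\int_0^\infty g^{M-1}e^{-w/g-Mg}\mathrm{d}g$, and invoking the same Bessel integral with $\nu=M$, $\beta=w$, $\gamma=M$ gives \eqref{FW1}. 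The main obstacle I anticipate is purely bookkeeping --- correctly matching the triple $(\nu,\beta,\gamma)$ to the Bessel identity and then simplifying the resulting product of fractional powers without slips --- rather than any conceptual difficulty.
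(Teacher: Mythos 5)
Your derivation of \eqref{fW} and \eqref{FW} is exactly the paper's route: the Mellin-convolution formula for a product of independent gamma RVs, reduction to $\int_0^\infty g^{m_2M-m_1-1}e^{-m_1w/g-m_2Mg}\mathrm{d}g$, evaluation via \cite[Eq.~(3.471.9)]{Gradshteyn.2014} with the same triple $(\nu,\beta,\gamma)$, and the substitution $u=wx^2$ to map the CDF integral onto $[0,1]$; the bookkeeping you flag indeed works out to the stated prefactors. The one place you diverge is the Rayleigh special case \eqref{FW1}: the paper stays with \eqref{FW} at $m_1=m_2=1$ and evaluates the resulting finite integral $\int_0^1 x^M\operatorname{K}_{M-1}(2\sqrt{Mw}x)\mathrm{d}x$ via \cite[Eq.~(6.561.8)]{Gradshteyn.2014}, then simplifies the two resulting terms (the first collapses to $1$). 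You instead condition on $\textsl{g}$, use that $h$ is unit-mean exponential so $F_W(w)=1-\frac{M^M}{(M-1)!}\int_0^\infty g^{M-1}e^{-w/g-Mg}\mathrm{d}g$, and reuse the same Bessel identity with $\nu=M$, $\beta=w$, $\gamma=M$; this gives $1-\frac{2(Mw)^{M/2}}{(M-1)!}\operatorname{K}_M(2\sqrt{Mw})$ directly and is, if anything, slightly more economical since it needs only one table entry instead of two. Both arguments are correct and yield \eqref{FW1}; no gaps.
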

	\begin{proof}
		See Appendix~\ref{App_A}. \phantom\qedhere
	\end{proof}	
\subsection{Reliability-Rate Trade-off}\label{C2} 
As commented in Subsection~\ref{WIT},
	$k$, $n$ and $\epsilon$ are related according to \eqref{k}. Recently, Polyanskiy \textit{et. al} \cite{Polyanskiy.2010} have  provided an accurate characterization of the trade-off between these parameters in AWGN channels and for $n\ge 100$ channel uses. By explicitly including the effect of the SNR at the receiver, the trade-off is given by
	\begin{align}\label{ke}
	k\approx nC(\gamma)-\sqrt{nV(\gamma)}Q^{-1}(\epsilon_{\mathrm{th}})\log_2e,
	\end{align}
where $C(\gamma)=\log_2(1+\gamma)$ is the Shannon capacity and $V(\gamma)=1-\frac{1}{(1+\gamma)^2}$ is the channel dispersion, which measures the stochastic variability of the channel relative to a deterministic channel with the same capacity. Now, we can rewrite \eqref{ke} as
\begin{align}
\epsilon(\gamma,k,n)\approx Q\Biggl(\frac{C(\gamma)-k/n}{\sqrt{V(\gamma)/n}}\ln 2\Biggl),\label{ep}
\end{align}
which is the maximum error probability when transmitting $k$ information bits over a channel with SNR $\gamma$ and using $n$ complex symbols. 
Notice that \eqref{ep} matches the asymptotic \textit{outage probability} when $n\rightarrow \infty$ and/or $\gamma\rightarrow 0$. Meanwhile, for quasi-static fading channels the average maximum error probability is \cite[Eq.(59)]{Yang.2014}
\begin{align}
\bar{\epsilon}(k,n)\approx\mathbb{E}_{\gamma}\Biggl[Q\Biggl(\frac{C(\gamma)-k/n}{\sqrt{V(\gamma)/n}}\ln 2\Biggl)\Biggl], \label{EX}
\end{align}
since channel becomes conditionally Gaussian on $\gamma$, and we only require
to take expectation over that RV to attain the corresponding average error probability. However, and as it has been shown in \cite{Mary.2016}, the effect of the fading when evaluating \eqref{EX} makes disappear the impact of the finite blocklength when $k$ is not extremely small and there is not a strong LOS component, thus, the asymptotic outage probability, which is the  Laplace approximation of \eqref{EX}, is a good match in those cases and is given by 
\begin{align}
\bar{\epsilon}(k,n)&\approx\mathbb{P}\big[\gamma<2^{k/n}-1\big]=F_\gamma(2^{k/n}-1).\label{inf}
\end{align}
\subsection{Problem Statement}\label{PS}
Our goal in this work is finding the maximum instantaneous message size $k$ such that $\bar{\epsilon}(k,n)\le \epsilon_{\mathrm{th}}$ for given $v, n$, thus, fixed delay $\delta$.\footnote{Notice that $nT_c$ and $vT_c$ are the duration of the WIT and WET phases, and in most scenarios these durations must be fixed, e.g., for simpler synchronization and/or time-slotted operation.} We constrain $k$ to be above some fixed value $k_0$ in each transmission, which represents the minimum possible message size. Such value is determined by the minimum frame size in the physical layer, which includes the mandatory metadata fields in the particular frame structure, hence it would be impossible communicating with a message size smaller than $k_0$. Notice that in our scenario we do not perform reliability analysis of metadata and payload separately and still assume a fixed $k_0$. In such case our results hold by assuming that metadata and payload are jointly encoded, which favors the system reliability \cite{Popovski.2014}.

We refer to this as the rate control problem since by finding $k$ we set the transmission rate as $k/n$, and we address it for three different schemes, when		
\begin{itemize}
	\item $S$ adopts a fixed transmit rate (Section~\ref{FTR}) independently of the randomness of the system parameters, e.g., fading realizations;
	\item $S$ uses the information on the remaining battery charge (Section~\ref{KSC}), which is intrinsically related to the $T\rightarrow S$ WET channel realization, to decide on its rate;
	\item $S$ uses the full knowledge of all, WET and WIT channel realizations, to decide on its rate (Section~\ref{fCSI}), which is idealistic for a practical implementation and it is used as benchmark.
\end{itemize}

We assume that $S$ is aware of the number of antennas at $D$ and the value of $\psi$ in  \eqref{psi}. Notice that $\lambda_{sd}$ and $\sigma_{d}^2$ in \eqref{psi} can be acquired, with low frequency, by $S$ via a low-rate  feedback channel from $D$, also the remaining terms there can be estimated by knowing the average energy harvested at $S$, $\bar{E}=\eta P_{t}v T_c/\lambda_{ts}$. 

Finally, we adopt the asymptotic expression  of $\bar{\epsilon}(k,n)$ in \eqref{inf} as an initial guess or lower-bound for gaining in analytical tractability. This is, we start with the asymptotic formulation and then we state the procedure to obtain the non-asymptotic results based on the former. Additionally, in Section~\ref{results} we investigate the gap in performance between both, asymptotic and non-asymptotic, formulations. In the entire work we focus on the region where $\epsilon_{\mathrm{th}}< 10^{-1}$.
\section{Fixed Transmit Rate (FTR)}\label{FTR}
Herein, we analyze a rate allocation scheme which does not rely on any instantaneous information at $S$, hence, $S$ adopts a fixed rate $r=k/n$.
Notice that for being possible to meet the required reliability constraint it is necessary that $\bar{\epsilon}(k_0,n)\le\epsilon_{\mathrm{th}}$ since $\bar{\epsilon}(k,n)$ is an increasing function of $k$ and $k\ge k_0$. 
The allocated message size comes from solving equation $\bar{\epsilon}(k,n)=\epsilon_{\mathrm{th}}$ for $k$,  which can only be performed numerically when using \eqref{EX}. 
Thus, we use the approximation in \eqref{inf} as follows
\begin{align}
\bar{\epsilon}(k,n)&\approx\mathbb{P}\big[\gamma<2^{k/n}-1\big]\nonumber\\
&\stackrel{(a)}{=}\mathbb{P}\Big[\frac{v}{n}M\psi w<2^{k/n}-1\Big]\nonumber\\
&\stackrel{(b)}{=}F_{W}\Big(\frac{\big(2^{k/n}-1\big)}{ M\psi}\frac{n}{v}\Big),\label{eqe}
\end{align}
where $(a)$ comes from using \eqref{gam} and $(b)$ from the definition of the CDF. Now, the maximum message size comes from setting $\bar{\epsilon}(k,n)=\epsilon_{\mathrm{th}}$ in \eqref{eqe} and isolating $k$, which yields
\begin{align}
k_{_\mathrm{FTR}}&\!\approx\! n\log_2\!\Big(1\!+\!\frac{v}{ n}M\psi F_{W}^{-1}(\epsilon_{\mathrm{th}})\Big)\!=\!n\log_2\!\Big(1\!+\!\frac{v}{n}\chi\Big),
 \label{k1}
\end{align}
where $\chi=M\psi F_W^{-1}(\epsilon_{\mathrm{th}})$. In case that $k_{_\mathrm{FTR}}<k_0$ the FTR scheme can not satisfy the reliability requirements at the same time that satisfying the message size constraint, thus, the rate allocation is infeasible and no transmission occurs.
Notice that using \eqref{k1} requires finding the inverse of \eqref{FW}, which is not possible in closed form. Following result addresses that issue for the reliability region of our concern.
\begin{theorem}\label{the_1}
	$F_W^{-1}(\epsilon_{\mathrm{th}})$ converges to
	\begin{align}	
	-\mathcal{W}\left(\!\!-\frac{\Big(\frac{\Gamma(m_2M)\Gamma(m_1)\min(m_2M,m_1)}{\Gamma\big(|m_2M-m_1|\big)}\epsilon_{\mathrm{th}}\Big)^{\frac{1}{\min(m_2M,m_1)}}}{m_1m_2M}\!\right) \label{k1G}
	\end{align}
	as $\epsilon_{\mathrm{th}}\rightarrow 0$ for $m_2M\ne m_1$. For $\epsilon_{\mathrm{th}}\le 10^{-1}$, \eqref{k1G} can be used already as an accurate approximation of $F^{-1}_W(\epsilon_{\mathrm{th}})$, specially for relatively large $M$. For instance, when $m_1=4,\ m_2=2$, the relative error from using \eqref{k1G} instead of the numerically-evaluated exact inverse function of \eqref{FW} is $20\%$ and $5\%$, for $M=1$ and $M=8$, respectively.
\end{theorem}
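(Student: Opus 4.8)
The plan is to determine the behaviour of $F_W$ near the origin and then invert it, since $F_W^{-1}(\epsilon_{\mathrm{th}})\to 0$ as $\epsilon_{\mathrm{th}}\to 0$. Throughout I would assume without loss of generality that $m_1<m_2M$, so that $\min(m_2M,m_1)=m_1$ and $|m_2M-m_1|=m_2M-m_1$; the complementary case $m_2M<m_1$ follows by the symmetry $\operatorname{K}_t=\operatorname{K}_{-t}$ of \eqref{fW} (equivalently, by exchanging the roles of $h$ and $\textsl{g}$ in $w=h\textsl{g}$), which is precisely why the final expression is symmetric through the $\min(\cdot)$ and $|\cdot|$ operators.

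First I would write $F_W(w)=\mathbb{E}_{\textsl{g}}\big[F_h(w/\textsl{g})\big]$, conditioning on the variable with the larger shape parameter. For small argument the lower incomplete gamma function gives $F_h(x)\approx \frac{(m_1x)^{m_1}e^{-m_1x}}{\Gamma(m_1+1)}$, where retaining the exponential factor is the key modelling choice. Substituting $x=w/\textsl{g}$ and taking the expectation, the power term yields $\mathbb{E}_{\textsl{g}}[\textsl{g}^{-m_1}]=\frac{(m_2M)^{m_1}\Gamma(m_2M-m_1)}{\Gamma(m_2M)}$ in closed form, while the exponential correction $e^{-m_1w/\textsl{g}}$ would be evaluated at the mean $\mathbb{E}[\textsl{g}]=1$, i.e. replaced by $e^{-m_1w}$, which is consistent to leading order for small $w$. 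Collecting terms gives the compact approximation $F_W(w)\approx C\,w^{m_1}e^{-m_1w}$ with $C=\frac{\Gamma(m_2M-m_1)(m_1m_2M)^{m_1}}{\Gamma(m_1)\Gamma(m_2M)\,m_1}$.

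Next I would set $F_W(w)=\epsilon_{\mathrm{th}}$ and solve for $w$. Writing $\mu=\min(m_2M,m_1)$, the relation becomes $(we^{-w})^{\mu}=\epsilon_{\mathrm{th}}/C$, so that $we^{-w}=a$ with $a=\frac{1}{m_1m_2M}\big(\frac{\Gamma(m_2M)\Gamma(m_1)\mu}{\Gamma(|m_2M-m_1|)}\epsilon_{\mathrm{th}}\big)^{1/\mu}$. Putting $u=-w$ turns this into $ue^{u}=-a$, whose solution on the relevant branch is $u=\mathcal{W}(-a)$; hence $w=-\mathcal{W}(-a)$, which is exactly \eqref{k1G}. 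I would check that for small $\epsilon_{\mathrm{th}}$ one has $-a\in(-\tfrac1e,0)$, so the main branch applies and returns the small positive root corresponding to the operating regime $w\to 0$.

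Finally, to establish the stated convergence I would use $\mathcal{W}(x)=x+O(x^2)$ near the origin, so that $-\mathcal{W}(-a)=a+O(a^2)\sim a$ as $\epsilon_{\mathrm{th}}\to 0$; since the exact CDF satisfies $F_W(w)\sim C\,w^{\mu}$ (the leading small-argument term of \eqref{fW} via $\operatorname{K}_{m_2M-m_1}(z)\approx\tfrac12\Gamma(|m_2M-m_1|)(z/2)^{-|m_2M-m_1|}$), its exact inverse is also asymptotic to $a$, and the two coincide in the limit. The main obstacle is justifying the refinement beyond leading order: the power law alone already yields the limiting value $a$, but the $e^{-\mu w}$ factor — and hence the Lambert-$\mathcal{W}$ form that makes \eqref{k1G} accurate at finite $\epsilon_{\mathrm{th}}$ rather than only asymptotically — rests on the mean-value replacement $e^{-m_1w/\textsl{g}}\approx e^{-m_1w}$, so I would close the argument by bounding or numerically assessing this approximation, consistent with the relative-error figures quoted in the statement.
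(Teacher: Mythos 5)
Your proposal is correct and lands on exactly the same intermediate approximation as the paper --- $F_W(w)\approx \vartheta w^{\varphi}e^{-\varphi w}$ with $\varphi=\min(m_2M,m_1)$ and the same constant --- followed by the identical Lambert-$\mathcal{W}$ inversion on the main branch. The difference is in how that approximation is obtained. The paper works directly from \eqref{FW}: it applies the small-argument asymptotics of $\operatorname{K}_{m_2M-m_1}$ to get the pure power law \eqref{ap1}, and then inserts the factor $e^{-\min(m_2M,m_1)w}$ \emph{empirically}, justified only by a heuristic appeal to the large-argument behavior of Bessel functions and by the numerical fit in Fig.~\ref{Fig8}. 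You instead write $F_W(w)=\mathbb{E}_{\textsl{g}}[F_H(w/\textsl{g})]$, use the refined small-argument expansion of the lower incomplete gamma function (which carries the exponential factor naturally), compute $\mathbb{E}[\textsl{g}^{-m_1}]$ in closed form, and pull the exponential out via the substitution $e^{-m_1w/\textsl{g}}\approx e^{-m_1w}$. This gives the exponential correction a more identifiable origin, but be aware that your mean-value step is not actually consistent to first order in $w$: expanding, $\mathbb{E}[\textsl{g}^{-m_1}e^{-m_1w/\textsl{g}}]=\mathbb{E}[\textsl{g}^{-m_1}]-m_1w\,\mathbb{E}[\textsl{g}^{-m_1-1}]+O(w^2)$, whereas your replacement produces $\mathbb{E}[\textsl{g}^{-m_1}](1-m_1w)+O(w^2)$, and $\mathbb{E}[\textsl{g}^{-m_1-1}]\ne\mathbb{E}[\textsl{g}^{-m_1}]$ in general. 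So your exponential factor is ultimately as heuristic as the paper's, just differently motivated; both arguments rigorously establish only the leading power law, which --- as you correctly point out --- is all that is needed for the limiting claim as $\epsilon_{\mathrm{th}}\to 0$, while the finite-$\epsilon_{\mathrm{th}}$ accuracy claim rests on numerical validation in either case. Your explicit check that the argument of $\mathcal{W}$ lies in $(-\tfrac{1}{e},0)$ and your closing asymptotic argument via $\mathcal{W}(x)\sim x$ are points the paper leaves implicit.
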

\begin{proof}
	See Appendix~\ref{App_B}. \phantom\qedhere
\end{proof}

For setups with $m_2M=m_1$ we were not able of finding an accurate approximation, but notice that this case corresponds to very specific scenarios and its importance would be mainly theoretical rather than practical.

Returning to \eqref{k1}, notice that $\frac{v}{n}\chi$ works as an equivalent SNR which increases linearly with $\psi$. Additionally, it increases with $m_2M$ and $m_1$, and following a power-law of $\epsilon_{\mathrm{th}}$, inversely proportional to the minimum between $m_2M$ and $m_1$. The latter statement comes from using the definition of $\chi$ along with \eqref{k1G}.

As stated in Subsection~\ref{PS} our aim is to find the maximum instantaneous message size for a given $\epsilon_{\mathrm{th}},\ v$ and $n$, while the relevance of fixing $v$ and $n$ was also highlighted (see Comment~4). However, in more flexible scenarios where choosing $v$ and $n$ under the delay constraint $\delta=v+n$ is allowed, there is an optimum blocklength for WET and WIT phases, which we further investigate next.
\begin{theorem}\label{the_2}
	Taking advantage of \eqref{k1}, the optimum blocklength for WIT and WET phases is approximately given by
	\begin{align}
	n^*&\approx \frac{\chi}{\frac{\chi-1}{\mathcal{W}\big(\frac{\chi-1}{e}\big)}+\chi-1}\delta,\label{noptFTR}\\
	v^*&\approx\delta-n^*, \label{v}
	\end{align}
	respectively, and the approximation includes taking the nearest integer.
\end{theorem}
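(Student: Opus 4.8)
The plan is to treat $k_{_\mathrm{FTR}}$ in \eqref{k1} as a function of $n$ alone by substituting the delay constraint $v=\delta-n$, relax $n$ to a continuous variable, locate the interior maximizer through a first-order condition, and finally round to the nearest integer (which is the source of the ``approximately'' in the statement, together with the approximations already baked into \eqref{k1}). Discarding the constant factor $1/\ln 2$, which does not move the optimum, I would maximize
\begin{align}
F(n)=n\ln\Big(1+\tfrac{(\delta-n)\chi}{n}\Big)=n\ln u(n),\qquad u(n)=(1-\chi)+\tfrac{\chi\delta}{n}. \nonumber
\end{align}

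Next I would differentiate. Using $u'(n)=-\chi\delta/n^2$ together with the identity $\chi\delta/n=u(n)-1+\chi$, the term $n\,u'(n)/u(n)$ simplifies to $-1+(1-\chi)/u(n)$, so the stationarity condition $F'(n)=0$ collapses to $\ln u-1+(1-\chi)/u=0$, an equation in the single quantity $u(n)$. Multiplying by $u$ gives $u\ln u-u+(1-\chi)=0$. The substitution $t=\ln u-1$ then recasts this into the canonical form $t\,e^{t}=(\chi-1)/e$, whose solution on the principal branch is $t=\mathcal{W}\big((\chi-1)/e\big)$, i.e. $u=e^{1+\mathcal{W}((\chi-1)/e)}$. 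Applying the defining identity $e^{\mathcal{W}(x)}=x/\mathcal{W}(x)$ reduces this to $u=(\chi-1)/\mathcal{W}\big((\chi-1)/e\big)$. Back-substituting $u=u(n)=(1-\chi)+\chi\delta/n$ and isolating $n$ reproduces exactly \eqref{noptFTR}, after which \eqref{v} is immediate from $v^*=\delta-n^*$.

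It remains to certify that this stationary point is the global maximizer rather than a minimum or a saddle. I would do this by a boundary argument: $F(n)\to 0$ as $n\to\delta^-$ (since then $u\to 1$, so $\ln u\to 0$) and also as $n\to 0^+$ (since $n\ln(\chi\delta/n)\to 0$), while $F(n)>0$ on the open interval $(0,\delta)$; hence the unique interior stationary point must be a global maximum. A quick consistency check in the regime of interest (where $(\chi-1)/e$ lies in the admissible domain of $\mathcal{W}$, giving $\mathcal{W}(\cdot)\in(-1,0)$ and therefore $u\in(1,e)$) confirms that the resulting $n^*$ indeed satisfies $0<n^*<\delta$, so that $v^*>0$.

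I expect the main obstacle to be the transcendental stationarity equation: the non-obvious step is recognizing the shift $t=\ln u-1$ that normalizes it into the form $t\,e^t=(\chi-1)/e$, followed by the correct use of the Lambert W identity to obtain the clean closed form. Secondary care is needed to verify that $(\chi-1)/e$ falls within the valid range of the principal branch, which effectively delimits the regime of $\chi$ for which \eqref{noptFTR} is well-defined, and to rule out the degenerate endpoints via the boundary analysis above.
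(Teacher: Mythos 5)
Your proposal is correct and follows essentially the same route as the paper's proof in Appendix~C: the same substitution $y=u=1+\tfrac{(\delta-n)\chi}{n}$, the same reduction of the first-order condition to $u(\ln u-1)=\chi-1$, and the same Lambert-W manipulation via $e^{\mathcal{W}(x)}=x/\mathcal{W}(x)$. The only (minor) difference is that the paper certifies the interior maximum by computing the second derivative and invoking concavity, whereas you use the boundary values $F(0^+)=F(\delta^-)=0$ together with uniqueness of the stationary point; both arguments are valid.
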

\begin{proof}
	See Appendix~\ref{App_C}. \phantom\qedhere
\end{proof}

\begin{figure}[t!]
	\centering
	\includegraphics[width=0.47\textwidth]{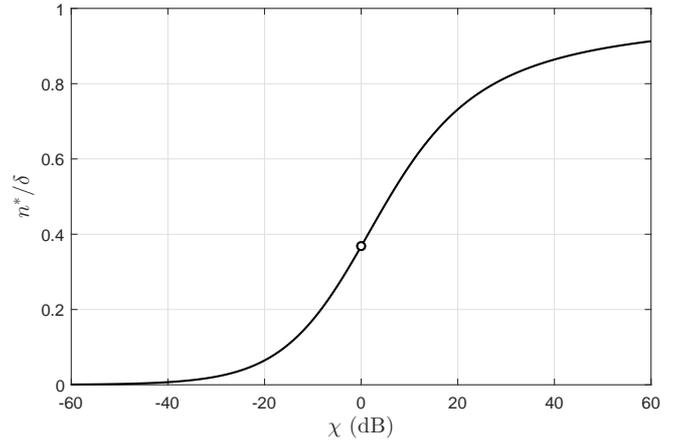}
	\caption{$n^*/\delta$ as a function of $\chi$.}		
	\label{Fig2}
\end{figure}
From \eqref{noptFTR} notice that	$n^*$ increases linearly with $\delta$, while it is also an increasing function of $\chi$ as shown in Fig.~\ref{Fig2}. Although \eqref{noptFTR} is indeterminate for $\chi=1\ (0$ dB$)$ we know that $\lim_{\chi\rightarrow 1} \frac{\chi -1}{\mathcal{W}\big(\frac{\chi-1}{e}\big)}=e$, thus, $\lim_{\chi\rightarrow 1}n^*/\delta=1/e $. Since $\chi$ is also an increasing function of $\epsilon_{\mathrm{th}}$ (because $F_W^{-1}(\epsilon_{\mathrm{th}})$ increases with $\epsilon_{\mathrm{th}}$ by definition), then, $n^*$ is an increasing function of $\epsilon_{\mathrm{th}}$.
This result is very interesting since it clearly states that the harvested energy, through the $v$ WET channel uses, becomes more and more relevant when targeting greater reliabilities than the information data rate, $k/n$. Thus, it is better improving the harvested energy statistics while transmitting with greater information data rates (since delay is fixed) for meeting more stringent reliability constraints. Finally, notice that $n^*$ does not depend on the fading characteristics when using the asymptotic formulation as so far.

Until this point we have been using the asymptotic approximation \eqref{inf} in order to attain analytical and insightful results: the message size \eqref{k1} and the optimum blocklengths \eqref{noptFTR}, \eqref{v} for a given delay $\delta$. Notice that at finite blocklength we can use the asymptotic value provided by \eqref{k1} as an initial step to solve $\bar{\epsilon}(k,n)=\epsilon_{\mathrm{th}}$ with \eqref{EX}. The procedure is illustrated in \textbf{Algorithm~\ref{alg11}}, while
\begin{algorithm}[t!]
	\caption{Finding $k$ such that $\bar{\epsilon}(k,n)=\epsilon_{\mathrm{th}}$}
	\label{alg11}
\begin{algorithmic} [1]
\State Calculate $k$ according to \eqref{k1}  \label{line01} 
\If{$k\ge k_0$}
\State Evaluate \eqref{EX} \label{line02} 
\If{$\epsilon>\epsilon_{\mathrm{th}}$} \label{line04}    
\State Decrease $k$
\State Return to line \ref{line02}						\label{line06}
\Else
\If{$k<k_0$}
\State \textit{Infeasible rate}
\State $k=0$
\EndIf
\EndIf        \label{line07}                     
\Else
\State \textit{Infeasible rate}
\State $k=0$
\EndIf
\State Return $k$
\State End \label{line08}         
\end{algorithmic}
\end{algorithm}
in Section~\ref{results} numerical results evaluate the gap between the infinite and finite blocklength formulations and further discussion follows.

In the next section we propose a rate allocation scheme that forces $S$ to exploit its battery state of charge information for performance improvements.
\section{Transmit rate with Known State of Charge (KSC)}\label{KSC}
In the previous section $S$ had to rely on a conservative fixed transmit rate since no instantaneous CSI was available. Herein we investigate the optimum variable transmit rate if $S$ exploits the information related with the state of charge of its battery, which is strictly related with the CSI of the link $T\rightarrow S$. Different from the FTR scheme where the transmit rate is fixed along the transmission rounds, in this case the required transmit rate is computed after concluding each WET phase. In that case the asymptotic outage probability is given by
\begin{align}
\bar{\epsilon}(k,n)&\approx \mathbb{P}\big[\gamma<2^{k/n}-1\big|E \big]\nonumber\\
&\stackrel{(a)}{=}\mathbb{P}\Big[\frac{v}{n}M\psi h\textsl{g}<2^{k/n}-1\Big|h\Big]\nonumber\\
&\stackrel{(b)}{=} F_{\mathcal{G}}\bigg(\frac{\big(2^{k/n}-1\big)}{M\psi h}\frac{n}{v}\bigg|h\bigg)\nonumber\\
\bar{\epsilon}(k,n,h)&\stackrel{(c)}{\approx} F_{\mathcal{G}}\bigg(\frac{\big(2^{k/n}-1\big)}{M\psi h}\frac{n}{v}\bigg),\label{eqq}
\end{align}
where $(a)$ follows directly after using \eqref{gam}, while $(b)$ comes from isolating $\textsl{g}$ and using the definition of CDF. Notice that this is different than the result in \eqref{eqe} because herein the battery charge is known, thus $h$ is not random anymore and the only RV is $\textsl{g}$.
Finally, $(c)$ comes from incorporating $h$ as an input parameter in the error function.

Obviously, for some values of $h$, $h<h_0$, we cannot find $k\ge k_0$ such that  $\bar{\epsilon}(k,n,h)=\epsilon_{\mathrm{th}}$. This is because the WET channel was so poor such that for $k\ge k_0$ we have $\bar{\epsilon}(k,n,h)>\epsilon_{\mathrm{th}}$. Therefore, the transmission strategy under this scheme is
\begin{enumerate}
	\item If $h\le h_0$, where $h_0$ works as a threshold, $S$ transmits with $k=k_0$; 
	\item otherwise; $S$ adopts a rate which is dependent on the specific $h$ such that on average it performs with error probability $\epsilon_{\mathrm{th}}^*\le \epsilon_{\mathrm{th}}$. This allows compensating the average error probability to meet the reliability requirement.
\end{enumerate}
Thus, when $k_0>0$ it is required that
\begin{align}
\int\limits_{0}^{h_0}\bar{\epsilon}(k_0,n,h)f_H(h)\mathrm{d}h+\epsilon_{\mathrm{th}}^*\int\limits_{h_0}^{\infty}f_H(h)\mathrm{d}h&\!=\!\epsilon_{\mathrm{th}}\nonumber\\
\int\limits_{0}^{h_0}\bar{\epsilon}(k_0,n,h)f_H(h)\mathrm{d}h+\epsilon_{\mathrm{th}}^*\big(1-F_H(h_0)\big)&\!=\!\epsilon_{\mathrm{th}}\nonumber\\
\underbrace{\int\limits_{0}^{h_0}\!\frac{m_1^{m_1}}{\Gamma(m_1)}\bar{\epsilon}(k_0,n,h)h^{m_1\!-\!1}e^{-m_1h}\mathrm{d}h}_{z_1(h_0)}\!+\!\underbrace{\epsilon_{\mathrm{th}}^*\frac{\Gamma(m_1,m_1h_0)}{\Gamma(m_1)}}_{z_2(h_0)}&\!=\!\epsilon_{\mathrm{th}}.\label{eq}
\end{align}
Also, above scheme implies that 
\begin{align}
\epsilon_{\mathrm{th}}^*=\bar{\epsilon}(k_0,n,h_0),\label{eq1}
\end{align}
and notice that 
\begin{align}
\lim\limits_{h_0\rightarrow \infty}\!(z_1(h_0)\!+\!z_2(h_0))&\!=\!\int\limits_{0}^{\infty}\!\frac{m_1^{m_1}}{\Gamma(m_1)}\bar{\epsilon}(k_0,n,h)h^{m_1\!-\!1}e^{-m_1h}\mathrm{d}h\nonumber\\
&=\bar{\epsilon}(k_0,n);\nonumber
\end{align}
while that is required to be equal or smaller than the target $\epsilon_{\mathrm{th}}$ for feasibility. This same result was attained in the previous section for the FTR scheme. Thus, neither KSC, nor other transmission rate allocation strategy  operating with the same values of $n$ and $v$, is capable of providing a rate that meets the average reliability constraint when $\bar{\epsilon}(k_0,n)>\epsilon_{\mathrm{th}}$.\footnote{However larger data rates are provided when the feasibility condition holds, which is illustrated in Section~\ref{results}.}

Now we are able of providing the following result characterizing the rate allocation under the KSC scheme when $\bar{\epsilon}(k_0,n)<\epsilon_{\mathrm{th}}$.
\begin{theorem}\label{the3}
	Under the KSC scheme and with $k_0>0$, $S$ chooses its message size according to
	\begin{equation}\label{kexp}
	k_{_\mathrm{KSC}}=\left\{ \begin{array}{lc}
	k_0, &\ \mathrm{if}\   h\le h_0 \\
	n\log_2\Big(1+\frac{2^{k_0/n}-1}{h_0}h\Big), &\ \mathrm{if}\ h>h_0
	\end{array}
	\right.,
	\end{equation}
	where $h_0$ is the unique solution of \eqref{eq} and can be obtained numerically by a root-finding algorithm\footnote{See also Appendix~\ref{App_F} for an specific algorithm to solve \eqref{eq}.}.
\end{theorem}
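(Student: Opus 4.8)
The plan is to handle the two branches of \eqref{kexp} separately and then settle the existence and uniqueness of the threshold $h_0$. The lower branch is immediate: by construction of the transmission strategy, whenever $h\le h_0$ the node is forced to transmit at the minimum admissible message size, so $k_{_\mathrm{KSC}}=k_0$ requires no derivation. All the substance lies in the upper branch and in characterizing $h_0$.

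For $h>h_0$ I would start from the conditional error expression \eqref{eqq}, where after conditioning on the battery charge the only remaining randomness is in $\textsl{g}$. Imposing the design requirement that the node operates at error probability exactly $\epsilon_{\mathrm{th}}^*$ gives $F_{\mathcal{G}}\big(\tfrac{(2^{k/n}-1)}{M\psi h}\tfrac{n}{v}\big)=\epsilon_{\mathrm{th}}^*$; inverting the CDF and isolating $k$ yields
\[
k = n\log_2\!\Big(1+\tfrac{v}{n}M\psi h\, F_{\mathcal{G}}^{-1}(\epsilon_{\mathrm{th}}^*)\Big),
\]
which still carries the inconvenient inverse $F_{\mathcal{G}}^{-1}$. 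The key step is to eliminate it via \eqref{eq1}: since $\epsilon_{\mathrm{th}}^*=\bar{\epsilon}(k_0,n,h_0)=F_{\mathcal{G}}\big(\tfrac{(2^{k_0/n}-1)}{M\psi h_0}\tfrac{n}{v}\big)$, applying $F_{\mathcal{G}}^{-1}$ to both sides identifies $F_{\mathcal{G}}^{-1}(\epsilon_{\mathrm{th}}^*)=\tfrac{(2^{k_0/n}-1)}{M\psi h_0}\tfrac{n}{v}$. Substituting this back cancels the $M$, $\psi$, and $v/n$ factors and leaves exactly $k=n\log_2\big(1+\tfrac{2^{k_0/n}-1}{h_0}h\big)$, the upper branch of \eqref{kexp}. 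This substitution is the crux: it delivers a closed form without ever evaluating $F_{\mathcal{G}}^{-1}$ in closed form.

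It then remains to show \eqref{eq} has a unique root. Writing $F(h_0)=z_1(h_0)+z_2(h_0)$ with $z_2(h_0)=\bar{\epsilon}(k_0,n,h_0)\big(1-F_H(h_0)\big)$ (using $\Gamma(m_1,m_1 h_0)/\Gamma(m_1)=1-F_H(h_0)$ together with \eqref{eq1}), I would inspect the endpoints and the derivative. As $h_0\to 0$ the integral $z_1$ vanishes and $\bar{\epsilon}(k_0,n,0)=F_{\mathcal{G}}(\infty)=1$, so $F(0)=1>\epsilon_{\mathrm{th}}$; as $h_0\to\infty$ the limit already computed in the excerpt gives $F(\infty)=\bar{\epsilon}(k_0,n)$, which lies below $\epsilon_{\mathrm{th}}$ under the stated feasibility condition $\bar{\epsilon}(k_0,n)<\epsilon_{\mathrm{th}}$, so existence follows from the intermediate value theorem. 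For uniqueness I would differentiate: the fundamental theorem of calculus gives $z_1'(h_0)=\bar{\epsilon}(k_0,n,h_0)f_H(h_0)$, while the product rule on $z_2$ produces a term $-\bar{\epsilon}(k_0,n,h_0)f_H(h_0)$ that exactly cancels $z_1'$, leaving
\[
F'(h_0)=\frac{\partial \bar{\epsilon}(k_0,n,h_0)}{\partial h_0}\big(1-F_H(h_0)\big).
\]
Since $\bar{\epsilon}(k_0,n,h)$ is strictly decreasing in $h$ (its argument in \eqref{eqq} scales as $1/h$ and $F_{\mathcal{G}}$ is increasing) and $1-F_H(h_0)>0$, we obtain $F'(h_0)<0$ throughout, hence $F$ is strictly monotone and the root is unique.

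The main obstacle I anticipate is precisely this uniqueness step, i.e.\ recognizing and exploiting the cancellation between $z_1'$ and the second summand of $z_2'$; absent it, one faces a sign-indefinite combination of three terms and monotonicity is not apparent. By comparison, the closed-form of the upper branch is routine once the substitution via \eqref{eq1} is spotted.
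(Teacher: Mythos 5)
Your proof is correct and follows essentially the same route as the paper's Appendix~\ref{App_F}: the upper branch is obtained by equating $\bar{\epsilon}(k,n,h)$ to $\epsilon_{\mathrm{th}}^*=\bar{\epsilon}(k_0,n,h_0)$ and cancelling the common factors (your inverse-CDF phrasing is equivalent to the paper's equating of the two $F_{\mathcal{G}}$ arguments), and uniqueness rests on exactly the same cancellation of $z_1'$ against the $-\bar{\epsilon}(k_0,n,h_0)f_H(h_0)$ term from $z_2'$, leaving $\tfrac{d\bar{\epsilon}(k_0,n,h_0)}{dh_0}\big(1-F_H(h_0)\big)<0$. The only cosmetic difference is that you justify the negativity by monotonicity of $F_{\mathcal{G}}$ while the paper evaluates the derivative explicitly, and you make the intermediate-value existence argument explicit where the paper leaves it implicit.
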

\begin{proof}
	See Appendix~\ref{App_F}. \phantom\qedhere
\end{proof}

Then, based on \eqref{kexp} the average message size is given by
\begin{align}
\bar{k}_{_\mathrm{KSC}}&\!=\!k_0\int\limits_{0}^{h_0}f_{H}(h)\mathrm{d}h\!+\!n\int\limits_{h_0}^{\infty}\!\log_2\Big(1\!+\!\frac{2^{k_0/n}\!-\!1}{h_0}h\Big)f_{H}(h)\mathrm{d}h\nonumber\\
&\!\stackrel{(a)}{=}\!k_0\Big(\!1\!-\!\frac{\Gamma(m_1,\!m_1h_0)}{\Gamma(m_1)}\!\Big)\!+\!\frac{m_1^{m_1}ne^{\!-\!m_1h_0}h_0^{m_1}}{\ln(2)\big(2^{\frac{k_0}{n}}\!-\!1\big)^{m_1}\Gamma(m_1)}\times\nonumber\\
&\ \ \ \times\int\limits_{0}^{\infty}\!\!\ln\!\Big(2^{\frac{k_0}{n}}\!+\!x\Big)\big(x\!+\!2^{\frac{k_0}{n}}\!-\!1\big)^{m_1\!-\!1}e^{\frac{\!-\!m_1h_0}{2^{k_0/n}\!-\!1}x}\!\mathrm{d}x,\label{kk}
\end{align}
where $(a)$ comes from using the CDF and PDF of $h$, and changing variable $h\rightarrow \frac{(x+2^{k_0/n}-1)h_0}{2^{k_0/n}-1}$. Unfortunately there is not a closed-form solution for the integral\footnote{There are numerous numerical methods that could be used to solve efficiently \eqref{kk}, for instance, importance sampling by taking samples from an exponential distribution with mean $\frac{2^{k_0/n}-1}{m_1 h_0}$, or the Gaussian-Chebyshev quadrature method as in \cite{Ye.2019}. In our specific setup the latter would require some kind of transformation to make the integrating interval to be finite, e.g., using $x=\tan\omega$, $0\le\omega\le\pi/2$.} unless for the specific case of Rayleigh fading, for which
\begin{align}\label{mk}
\bar{k}_{_\mathrm{KSC}}&\!=\!k_0\big(1\!-\!e^{-h_0}\big)\!+\!\frac{ne^{\!-h_0}h_0}{2^{\frac{k_0}{n}}\!-\!1}\!\!\int\limits_{0}^{\infty}\!\!\log_2\!\Big(2^{\frac{k_0}{n}}\!+\!x\Big)e^{\frac{\!-h_0}{2^{k_0/n}\!-\!1}x}\!\mathrm{d}x\nonumber\\
&\!\stackrel{(a)}{=}\!k_0(1\!-\!e^{-h_0})\!+\!n\Bigg[\frac{e^{\frac{h_0}{2^{k_0/n}\!-\!1}}}{\ln(2)}\mathrm{E}_1\Big(\frac{2^{\frac{k_0}{n}}h_0}{2^{\frac{k_0}{n}}\!-\!1}\Big)\!+\!e^{-h_0}\frac{k_0}{n}\Bigg],	
\end{align}
where $(a)$ comes  with the help of \cite[Eq.(4.337.1)]{Gradshteyn.2014} and some algebraic manipulations.
Even when attaining a closed-form expression in \eqref{mk} for the average message size under Rayleigh fading, it becomes intractable  finding the optimum value of $n$ under this scheme, mainly because $h_0$, which is the solution of \eqref{eq}, depends also on $n$. However, either evaluating numerically \eqref{kk}  or finding numerically $n^*$ in \eqref{kk} or \eqref{mk} is still useful since the Monte Carlo computation alternative could be practically infeasible for some system parameter values. This is because $\bar{k}_{_\mathrm{KSC}}$ depends on $h_0$,  and using Monte Carlo to find accurately $h_0$ requires enormous computation resources when $\epsilon_{\mathrm{th}}$ is very small, e.g., $\epsilon_{\mathrm{th}}<10^{-6}$.

Next, we analyze the procedure for attaining the non-asymptotic results, and explain how to depart from our above results to reach them.
\subsection{Finite Blocklength formulation}
Under the finite blocklength formulation $h_0$ is greater than in the asymptotic case, thus, the value of $h_0$ found after solving \eqref{eq} can be used as a starting point to solve the following more-elaborate equation  
\begin{align}
\epsilon_{\mathrm{th}}=&\int_{0}^{\infty}\biggl[\int_{0}^{h_0}\epsilon(\gamma(h,\textsl{g}),k_0,n)f_H(h)\mathrm{d}h+\nonumber\\
&\qquad+\epsilon(\gamma(h_0,\textsl{g}),k_0,n)\big(1-F_H(h_0)\big)\biggl]f_{\mathcal{G}}(\textsl{g})\mathrm{d}\textsl{g},\label{eqs}
\end{align}
which is attained in an analogous way to \eqref{eq}, but this time we require to use \eqref{ep} with $\gamma(h,\textsl{g})=M\psi h \textsl{g}\tfrac{v}{n}$ according to \eqref{gam}.\footnote{Differently to the case when solving \eqref{eq}, here it is not recommended using a Newton-like method in order to solve \eqref{eqs} since computing the derivative of the function takes more computational resources than the function itself. The fact that the solution of \eqref{eqs} is close to the solution of \eqref{eq} should be exploited.}
 After finding $h_0$ we can evaluate
\begin{align}
\epsilon_{\mathrm{th}}^*=\int_{0}^{\infty}\epsilon(\gamma(h_0,\textsl{g}),k_0,n)f_{\mathcal{G}}(\textsc{g})\mathrm{d}\textsl{g}.\label{eqep}
\end{align}

The procedure for allocating the rate under the finite blocklength formulation is described in  \textbf{Algorithm~\ref{alg12}}, and in line~\ref{ll3} notice that  $k\in\Big[k_0,n\log_2\big(1+\tfrac{2^{k_0/n}-1}{h_0}h\big)\Big)$  according to \eqref{kexp}, thus we can use an iterative procedure as in the previous section, starting from $k=k_{_{\mathrm{KSC}}}$ and decreasing $k$ at each iteration until \eqref{eqN} is met.\footnote{From a practical and feasible perspective,  $S$ should have tabulated many channel realizations, $h$, in order to decide on its rate based on the minimum of the interval where the current realization lies in.}
\begin{algorithm}[t!]
		\caption{Finding $k$ such that $\bar{\epsilon}(k,n)=\epsilon_{\mathrm{th}}$}
		\label{alg12}
\begin{algorithmic} [1]	
	\State Find $h_0$ and $\epsilon_{\mathrm{th}}^*$ according to \eqref{eqs} and \eqref{eqep}, respectively.
	\State For each round, if $h\le h_0$ then $k=k_0$; otherwise	
	\State Find the maximum integer $k$ that satisfies \label{ll3}
	\begin{align}
	\int_{0}^{\infty}\epsilon(\gamma(h,\textsl{g}),k,n)f_{\mathcal{G}}(\textsl{g})\mathrm{d}\textsl{g}\le\epsilon_{\mathrm{th}}^*\label{eqN}
	\end{align}	
	\State End \label{line18}        
\end{algorithmic}
\end{algorithm}

In the next section we analyze the case where, in addition to the battery state of charge information, $S$ knows the CSI of the channels in the link $S\rightarrow D$ and uses also that information for allocating its transmit rate.
\section{Transmit rate with full CSI at $S$ (fCSI)}\label{fCSI}
In this setup $S$ knows all the channels $h$ and $g_i, \forall i$, thus it knows what would be the value of $\gamma$, and it can adjust its transmit rate to meet the required reliability. 
Therefore, the channel is AWGN and the maximum message size comes from using \eqref{ke} and is given by\footnote{For the FTR and KSC schemes we have departed from the asymptotic analysis since they provide a valid approximation that can be used as a starting point for finding their non-asymptotic equivalent. For the fCSI case this is no longer necessary since \eqref{rate} is non-asymptotic and it is already in closed-form.}
\begin{align}
k&\approx nC(\gamma)-\sqrt{n V(\gamma)}Q^{-1}(\epsilon_{\mathrm{th}}^*)\log_2 e.\label{rate}
\end{align}
Obviously the impact of finite blocklength is not negligible here. Notice that for sufficiently long blocklength, $n\rightarrow \infty$, yields $r\rightarrow C(\gamma)$, thus, following Shannon, it is possible to transmit with an arbitrary small (not bounded) error probability with such rate.

Similar to the previous section where $h$ was known, now $S$ will transmit with an adjustable $k>k_0$ when $w=h\textsl{g}>w_0$, and $w_0$ can be found by solving
\begin{align}\label{eqF}
\int\limits_{0}^{w_0}\!\!\!\epsilon(\gamma(w),k_0,n)f_W(w)\mathrm{d}w+\epsilon_{\mathrm{th}}^*\int\limits_{w_0}^{\infty}f_W(w)\mathrm{d}w&=\epsilon_{\mathrm{th}}\nonumber\\
\int\limits_{0}^{w_0}\epsilon(\gamma(w),k_0,n)f_W(w)\mathrm{d}w+\epsilon_{\mathrm{th}}^*\big(1-F_W(w_0)\big)&=\epsilon_{\mathrm{th}},
\end{align}
where $F_W(w)$ is given in \eqref{FW}, and
\begin{align}
\epsilon_{\mathrm{th}}^*=\epsilon(\gamma(w_0),k_0,n)
\end{align}
with $\gamma(w)=M\psi w\tfrac{v}{n}$ as in \eqref{gam}. Herein we do not discuss any particular method for solving \eqref{eqF}. It does not worth our attention since the fCSI method does not seem practically feasible, because the full knowledge of all the channels at $S$ is a very strong assumption, particularly in power-limited networks; and it is only useful to provide benchmark results, therefore we resort to standard numerical solutions available in MatLab, e.g., \texttt{fsolve}, \texttt{fzero}, \texttt{vpasolve}, and Wolfram Mathematica, e.g., \texttt{FindRoot}.  

Finally, based on \eqref{rate}, whenever $w>w_0$, $S$ transmits with
\begin{align}
k_{_\mathrm{fCSI}}=nC(\gamma(w))-\sqrt{V(\gamma(w))n}Q^{-1}(\epsilon_{\mathrm{th}}^*)\log_2e,
\end{align}
while
\begin{align}
\bar{k}_{_\mathrm{fCSI}}
&=k_0\big(1-F_W(w_0)\big)+\int_{w_0}^{\infty}k_{_\mathrm{fCSI}}f_W(w)\mathrm{d}w.
\end{align}

Now, we have concluded the analytical discussions around the three rate allocation schemes and in the following section we provide some numerical performance comparisons.
\section{Numerical Results}\label{results}
In this section, we present numerical results to investigate the performance of the proposed schemes as a function of the system parameters. Unless stated otherwise, results are obtained by using the values shown in Table~\ref{table2}. Notice that $\psi=0$ dB may correspond to a system with the following practical values (similar to those in \cite{Krikidis.2014,Chen.2015,Lopez2.2017,Lopez.2018,Lopez4.2018,Lu.2015}): $\eta=0.3$, which is feasible according to the state-of-the-art in EH circuit design, $P_t=10$W, $\sigma_{d}^2=-90$dBm, log-distance path loss model with $\lambda_{ts}=\kappa d_{ts}^{\alpha}$ and $\lambda_{sd}=\kappa d_{sd}^{\alpha}$, where $\kappa=30$dB is the average signal power attenuation at a reference distance of 1 meter and accounts for other factors such as the carrier frequency and heights and gains of the antennas, $d_{ts}=10$m and $d_{sd}=14$m are the lengths of the links $T\rightarrow S$ and $S\rightarrow D$, and $\alpha=3$ is the path loss exponent. While by setting $m_1=5$ and $m_2=2$ we model both WIT and WET links with certain LOS as required in practical WPCN setups. Also, by taking into account that the typical message size in uMTC use cases is 32 bytes \cite{3GPP} while the value of $k_0$ must be much smaller we have set $k_0 = 2$ bytes. Finally, notice that the overall delay $\delta=v+n=1200$ channel uses\footnote{Since $\delta$ must be small for short-latency transmissions, both $v$ and $n$ are small as well (strictly smaller than $\delta$) but $v$ has usually a stronger impact on the system performance than $n$ according to the analysis around Fig.~\ref{Fig2}, which is also illustrated in \cite{Lopez2.2017,Lopez.2018} for different setups, hence, its value should be set larger than $n$. As shown in Table~\ref{table2} we utilize $n=200,\ v=1000$ channel uses.} corresponds to $3.6$ms by assuming $T_c=3 \mu$s as in \cite{Lopez2.2017}.
\begin{table}[!t]
	\centering
	\caption{System parameters}
	\label{table2}
	\begin{tabular}{cc}
		\hline		
		\textbf{Parameter}	& \textbf{Value}	 \\ \hline
		$k_0$				& $2$ bytes (16 bits)		\\
		$m_1,\ m_2$			& $5$, $2$ (similar to \cite{Lopez2.2017,Lopez.2018})	\\
		$n,\ v$			    & $200,\ 1000$ channel uses (similar to \cite{Lopez.2018})	\\
		$\psi$				& $0$ dB ~($\!\!$\cite{Chen.2015,Lopez2.2017,Lopez.2018,Lopez4.2018,Krikidis.2014,Lu.2015})	\\
		$T_c$			& $3\ \mu$s \cite{Lopez2.2017} 	\\	
		\hline
	\end{tabular}
\end{table}
\begin{figure}[t!]
	\centering
	\includegraphics[width=0.47\textwidth]{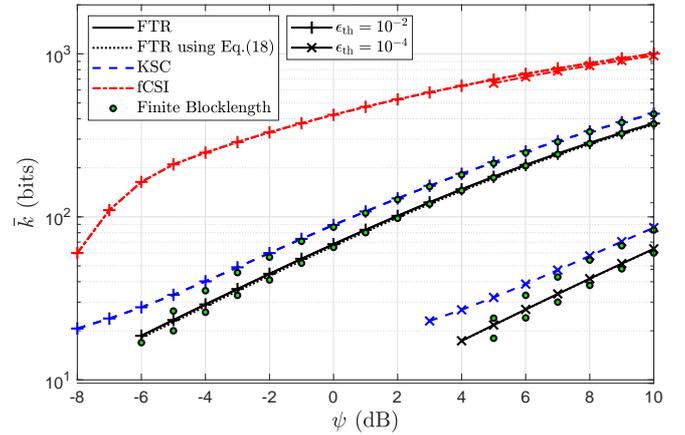}
	\caption{Average message size as a function of $\psi$ when $M=1$, $n=200$ and $v=1000$ channel uses.}
	\label{Fig3}
\end{figure}

Fig.~\ref{Fig3} shows the attainable average message size for each scheme as a function of $\psi$ with $M=1$ and $\epsilon_{\mathrm{th}}\in\{10^{-2},10^{-4}\}$. Obviously, an increase in $\psi$ improves the system performance according to \eqref{psi}; in fact, operating with $\epsilon_{\mathrm{th}}\le 10^{-4}$ is only possible when $\psi\ge 5$ dB. Notice that the schemes KSC and fCSI, that take advantage of certain levels of information, allow to attain greater data rates. In that sense, KSC seems attractive since its performance overcomes significantly the one attained by FTR while making use only of the battery charge information, which is practically viable. However, neither KSC nor fCSI  expand the region for which $\epsilon_{\mathrm{th}}=10^{-4}$ is feasible, below $\psi=5$ dB, and the reasons are highlighted in Section~\ref{KSC}. Also, the greater the reliability constraints, the smaller the message sizes on average. Notice that \eqref{k1G} is accurate, and the results coming from the finite blocklength formulation approximate the asymptotic case for relatively large $\bar{k}$, e.g., for $\bar{k}>50$ bits.

The average message size as a function of the WIT blocklength, $n$, is shown in Fig.~\ref{Fig4} for fixed $\delta=1200$ channel uses while setting $(a)$ $M=1$ and $(b)$ $M=4$. As shown in Fig.~\ref{Fig3}, it is not possible operating with $\epsilon_{\mathrm{th}}=10^{-4}$ when $\psi=0$ dB and $M=1$, thus, Fig.~\ref{Fig4}-$(a)$ shows only the performance for $\epsilon_{\mathrm{th}}=10^{-2}$, while for $M=4$ both reliability constraints are possible to satisfy as shown in Fig.~\ref{Fig4}-$(b)$. From both, Fig.~\ref{Fig4}-$(a)$ and $(b)$, we can appreciate the existence of an optimal point, $(n,v)$, which was analytically proved for the FTR scheme, and notice that $\arg \max\limits_n \bar{k}_{_\mathrm{FTR}}$ matches \eqref{noptFTR} even when using the approximation in \eqref{k1G}, thus, validating those expressions. As pointed out in Section~\ref{FTR} for the FTR scheme, $n^*$ is an increasing function of $\epsilon_{\mathrm{th}}$ and that is why when $M=4$
for $\epsilon_{\mathrm{th}}=10^{-4}\rightarrow n^*=272$ channel uses, while for $\epsilon_{\mathrm{th}}=10^{-2}$ the value of $n^*$ increases to $378$ channel uses. This fact holds also for the KSC and fCSI schemes. Additionally, notice that $n^*\big|_{\mathrm{fCSI}}>n^*\big|_{\mathrm{KSC}}>n^*\big|_{\mathrm{FTR}}$, and KSC and fCSI seem to benefit more of $n>v$ setups, which is an interesting result. 
\begin{figure}[t!]
	\centering
	\includegraphics[width=0.47\textwidth]{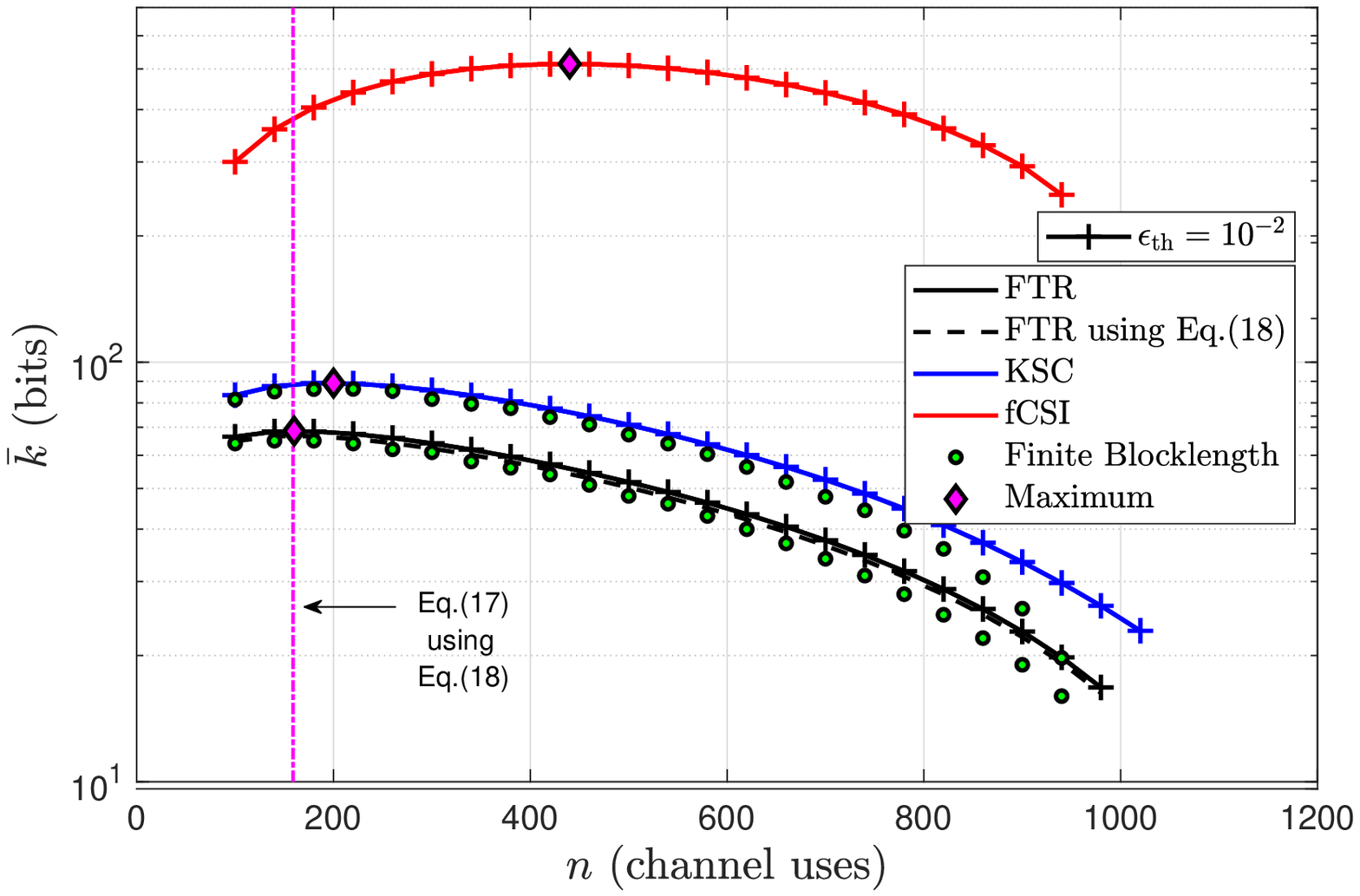}\\
	\includegraphics[width=0.47\textwidth]{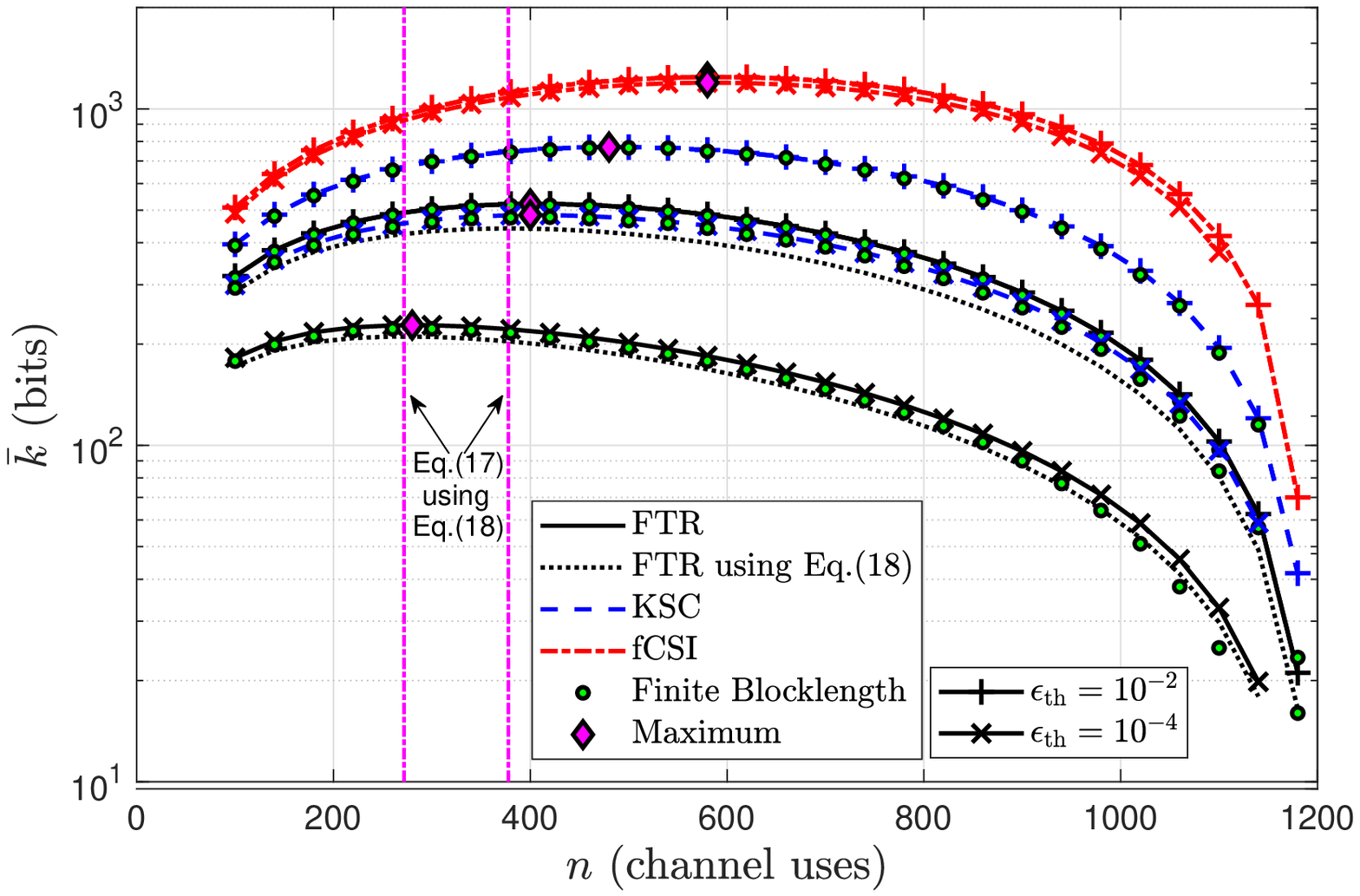}	
	\caption{Average message size as a function of $n$ for $\delta=1200$ channel uses and $(a)\ M=1$ (top),  $(b)\ M=4$ (bottom).}
	\label{Fig4}
\end{figure}
\begin{figure}[t!]
	\centering
	\includegraphics[width=0.47\textwidth]{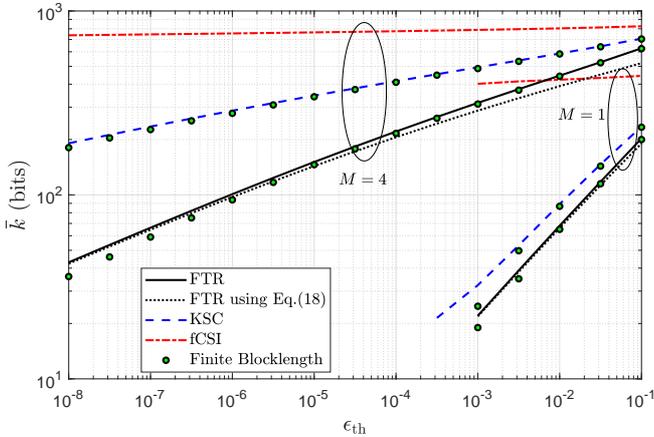}
	\caption{Average message size as a function of $\epsilon_{\mathrm{th}}$ for $M\in\{1,4\}$, $n=200$ and $v=1000$ channel uses.}		
	\label{Fig5}
\end{figure}

\begin{figure}[t!]
	\centering
	\includegraphics[width=0.47\textwidth]{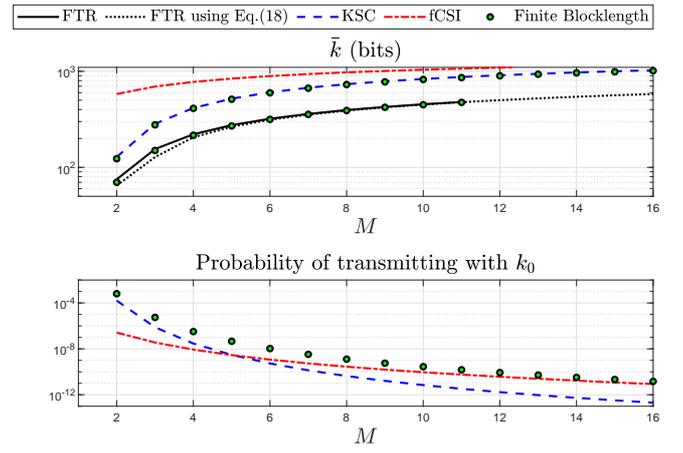}
	\caption{Average message size (top) and probability of transmitting with $k_0$ (bottom), as a function of $M$ for $\epsilon_{\mathrm{th}}=10^{-4}$, $n=200$ and $v=1000$ channel uses.}		
	\label{Fig6}
\end{figure}
\begin{figure}[t!]
	\centering
	\includegraphics[width=0.47\textwidth]{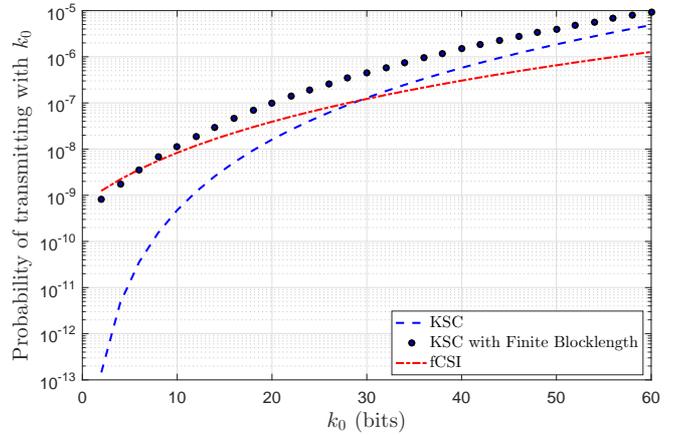}
	\caption{Probability of transmitting with $k_0$, as a function of $k_0$ for $M=4$, $\epsilon_{\mathrm{th}}=10^{-3}$, $n=200$ and $v=1000$ channel uses.}		
	\label{Fig7}
\end{figure}
Fig.~\ref{Fig5} shows the attainable average message size for each scheme as a function of $\epsilon_{\mathrm{th}}$ for $M\in\{1,4\}$. Notice that $\bar{k}$ has a near-linear behavior with respect to $\epsilon_{\mathrm{th}}$ for all the schemes in the log-log scale, thus, $\bar{k}\approx \theta \epsilon_{\mathrm{th}}^{\beta}$, where $\theta$ and $\beta$ depend on the system parameter values, but as shown in the figure they are strongly influenced by $M$, thus, also by $m_1$ and $m_2$. For the FTR scheme this is expected also from \eqref{k1G}. Stringent reliabilities, in the order of $1-10^{-4}$ and greater, are available when using four antennas.
More on this is discussed in Fig.~\ref{Fig6} where we show the positive impact of a greater number of antennas at $D$ for $\epsilon_{\mathrm{th}}=10^{-4}$. 
In fact, the system cannot operate under the reliability requirement when $M=1$, and notice that by increasing $M$ beyond $M=2$, the average message size increases while meeting the required reliability. As $M\rightarrow\infty$, the performance of KSC and fCSI tends to overlap since the resultant WIT channel behaves as an AWGN channel and the randomness in the WIT phase disappears. It is remarkable the performance gain that results from using KSC and fCSI schemes with respect to FTR. For example, for $M=2$, $S$ transmits with $k=70$ bits on average when using FTR, while if the information of the battery state of charge is used (KSC scheme), it can operate with up to $k=123$ bits on average. Additionally, the accuracy of \eqref{k1G} is verified again and its significance is shown to be even greater since for $M\ge 12$ the exact inversion of $F_W(w)$ in \eqref{FW} was not possible to be performed even numerically.
In Fig.~\ref{Fig6} it is also shown the probability of those events in which the allocated message size is the minimum possible, e.g., probability of transmitting with $k_0$, which is desired to be as small as possible.
Since the FTR scheme uses a deterministic rate, we only plot the performance of the KSC and fCSI schemes, for which $\mathbb{P}[k=k_0]=\mathbb{P}[h\le h_0]=F_H(h_0)$ and $\mathbb{P}[k=k_0]=\mathbb{P}[w\le w_0]=F_W(w_0)$, respectively.\footnote{Notice that $\mathbb{P}[k=k_0]>0$ holds always unless the channel becomes deterministic, e.g., $m_1,m_2M\rightarrow\infty$, for which there are only two possibilities: either $\mathbb{P}[k=k_0]=0$  or $\mathbb{P}[k=k_0]=1$.} 
As $M$ increases, the chances of transmitting with the minimum message size decreases, and  the average message size increases. Notice that the gap between the finite blocklength and the asymptotic formulations is considerable when analyzing the probability of transmitting with minimum message size, specially when $M$ increases\footnote{This is because the resultant WIT channel tends to behave as an AWGN channel for which the impact of the finite blocklength is unquestionable as shown in \cite{Polyanskiy.2010}.}, but it does not have a practical impact as shown when evaluating the average message size.

Figs.~\ref{Fig3}-\ref{Fig6} corroborate the appropriateness of using 
the asymptotic results as starting points when getting their finite blocklength counterparts. Finally, Fig.~\ref{Fig7} shows the probability that $S$ transmits with the minimum message size as a function of that size, for $\epsilon_{\mathrm{th}}=10^{-3}$ and $M=4$. As the minimum message size increases, the chances of transmitting messages with that size increases. Differently from the FTR scheme, $\mathbb{P}[k=k_0]$ will never reach the unity when using the KSC\footnote{Unless $\lim\limits_{h_0\rightarrow\infty}z_1(h_0)=\epsilon_{\mathrm{th}}$ for the particular $k_0$ in \eqref{eq}.} and fCSI schemes, thus, $\bar{k}_{_\mathrm{fCSI}}>\bar{k}_{_\mathrm{KSC}}>\bar{k}_{_\mathrm{FTR}}\ge k_0$.
\section{Conclusion}\label{conclusions}
In this paper, we proposed two rate control strategies in order to comply with the reliability and delay constraints of an uMTC WPCN, where multiple antennas are available at the information receiving side. We show the performance gains, with respect to a fixed rate transmission scheme, when the instantaneous battery charge information is used to adapt the transmit rate, while we have also analyzed the ideal scheme requiring full CSI at transmitter side as a benchmark.  The greater the reliability constraints, the smaller the message sizes on average, and the smaller the optimum WIT blocklength. We attain a closed-form expression for the optimum blocklengths when the allocated transmit rate is fixed, and we show that incorporating more information to the rate control schemes, e.g., variable rate schemes, increases the optimum WIT blocklength. Results also show the positive impact of a greater number of antennas, and corroborate the appropriateness of our procedures when using the asymptotic formulation as an approximation of the real finite blocklength results.
\appendices 
\section{Proof of Lemma~\ref{lem_1}}\label{App_A}
We proceed from the PDF of the product of two independent random variables as follows
\begin{align}
f_W(w)&=\int_{0}^{\infty}\frac{1}{\textsl{g}}f_\mathcal{G}(\textsl{g})f_H(w/\textsl{g})\mathrm{d}\textsl{g}\nonumber\\
&=\frac{(m_2M)^{m_2M}m_1^{m_1}w^{m_1-1}}{\Gamma(m_2 M)\Gamma(m_1)}\times\nonumber\\
&\qquad\qquad\times\int_{0}^{\infty}\textsl{g}^{m_2M-m_1-1}e^{-m_2M\textsl{g}-m_1w/\textsl{g}}\mathrm{d}\textsl{g}\nonumber\\
&\stackrel{(a)}{=}\frac{(m_2M)^{m_2M}m_1^{m_1}w^{m_1-1}}{\Gamma(m_2 M)\Gamma(m_1)}\bigg[2\Big(\frac{m_1 w}{m_2M}\Big)^{\frac{m_2M-m_1}{2}}\times\nonumber\\
&\qquad\qquad\times\operatorname{K}_{m_2M-m_1}\big(2\sqrt{m_1m_2M w}\big)\bigg]
\end{align}
\noindent for $w\ge 0$, where $\operatorname{K}_{M-1}(\mathrel{\cdot})$ is the modified Bessel function of second kind and order $M-1$. Step $(a)$ comes from solving the integral using \cite[Eq. (3.471.9)]{Gradshteyn.2014}, then, after simple simplifications we attain \eqref{fW}. Now, using \eqref{fW} the CDF for $w\ge 0$ is attained as follows
\begin{align}
F_W(w)&\!=\!\int_{0}^{w}f_W(u)\mathrm{d}u\nonumber\\
&=\frac{2(m_1m_2M)^{\frac{m_2M+m_1}{2}}}{\Gamma(m_2M)\Gamma(m_1)}\int_{0}^{w}u^{\frac{m_2M+m_1}{2}-1}\times\nonumber\\
&\qquad\qquad\times\operatorname{K}_{m_2M-m_1}\big(2\sqrt{m_1m_2Mu}\big)\mathrm{d}u,
\end{align}
and \eqref{FW} follows immediately after substituting $x=\sqrt{u/w}\rightarrow \mathrm{d}u=2wx\mathrm{d}x$. Unfortunately there is not known closed-form solution for the general case where $m_1,m_2>0$, but specifically for Rayleigh fading scenarios where $m_1=m_2=1$ we can proceed from \eqref{FW} as follows
\begin{align}
	F_W(w)&=\frac{4\big(Mw)^{\frac{M+1}{2}}}{(M-1)!}\int_{0}^{1}x^{M}\operatorname{K}_{M-1}\big(2\sqrt{Mw}x\big)\mathrm{d}x\nonumber\\
	&=\!\frac{4\big(Mw)^{\frac{M\!+\!1}{2}}}{(M\!-\!1)!}\!\bigg[\!\frac{2^{M\!-\!1}\Gamma(M)}{\big(2\sqrt{Mw}\big)^{M\!+\!1}}\!-\!\frac{\operatorname{K}_{\!M}\!\!\big(2\sqrt{Mw}\big)}{2\sqrt{Mw}}\!\bigg]\!,\label{cdf}
\end{align}
where the last step comes from solving the integral using \cite[Eq. (6.561.8)]{Gradshteyn.2014} and after some simple algebraic transformations we reach \eqref{FW1}.\hfill 	\qedsymbol
\section{Proof of Theorem~\ref{the_1}}\label{App_B}
We are interested in the region of high reliability where $w$ and $F_{W}(w)$ are small, thus, according to \cite[Eq. (10.30.2)]{Thompson.2011} and $\operatorname{K}_{v}(q)=\operatorname{K}_{-v}(q)$ we can state
	\begin{align}\label{A1}
	&\operatorname{K}_{m_2M-m_1}(2\sqrt{m_1m_2Mw}x)\nonumber\\
	&\stackrel{w\rightarrow 0}{=} \frac{1}{2}\Gamma\big(|m_2M\!-\!m_1|\big)\big(m_1m_2Mw\big)^{-\!\frac{|m_2M\!-\!m_1|}{2}}x^{-\!|m_2M\!-\!m_1|},
	\end{align}
	which is valid as long as $m_2M\ne m_1$.
	Substituting \eqref{A1} into \eqref{FW} while considering the cases for which $m_2M\gtrless m_1$ and combining them into only one expression, yields
	\begin{align}\label{ap1}
	F_W(w) &\stackrel{w\rightarrow 0}{=} \frac{2\Gamma\big(|m_2M-m_1|\big)}{\Gamma(m_2M)\Gamma(m_1)}\big(m_1m_2Mw\big)^{\min(m_2M,m_1)}\times\nonumber\\
	&\qquad\qquad\qquad\times\int_{0}^{1}x^{2\min(m_2M,m_1)-1}\mathrm{d}x\nonumber\\
	&\stackrel{(a)}{\approx}\! \frac{\Gamma\big(|m_2M\!-\!m_1|\big)\big(m_1m_2Mw\big)^{\min(m_2M,m_1)}}{\Gamma(m_2M)\Gamma(m_1)\min(m_2M,m_1)},
	\end{align}
\noindent where $(a)$ comes from solving the integral. In Fig.~\ref{Fig8}-$(a)$ it is shown that both, the exact and the approximation given in \eqref{ap1} for small $w$, converge in the left tail.
	\begin{figure}[t!]
	\centering
	\includegraphics[width=0.47\textwidth]{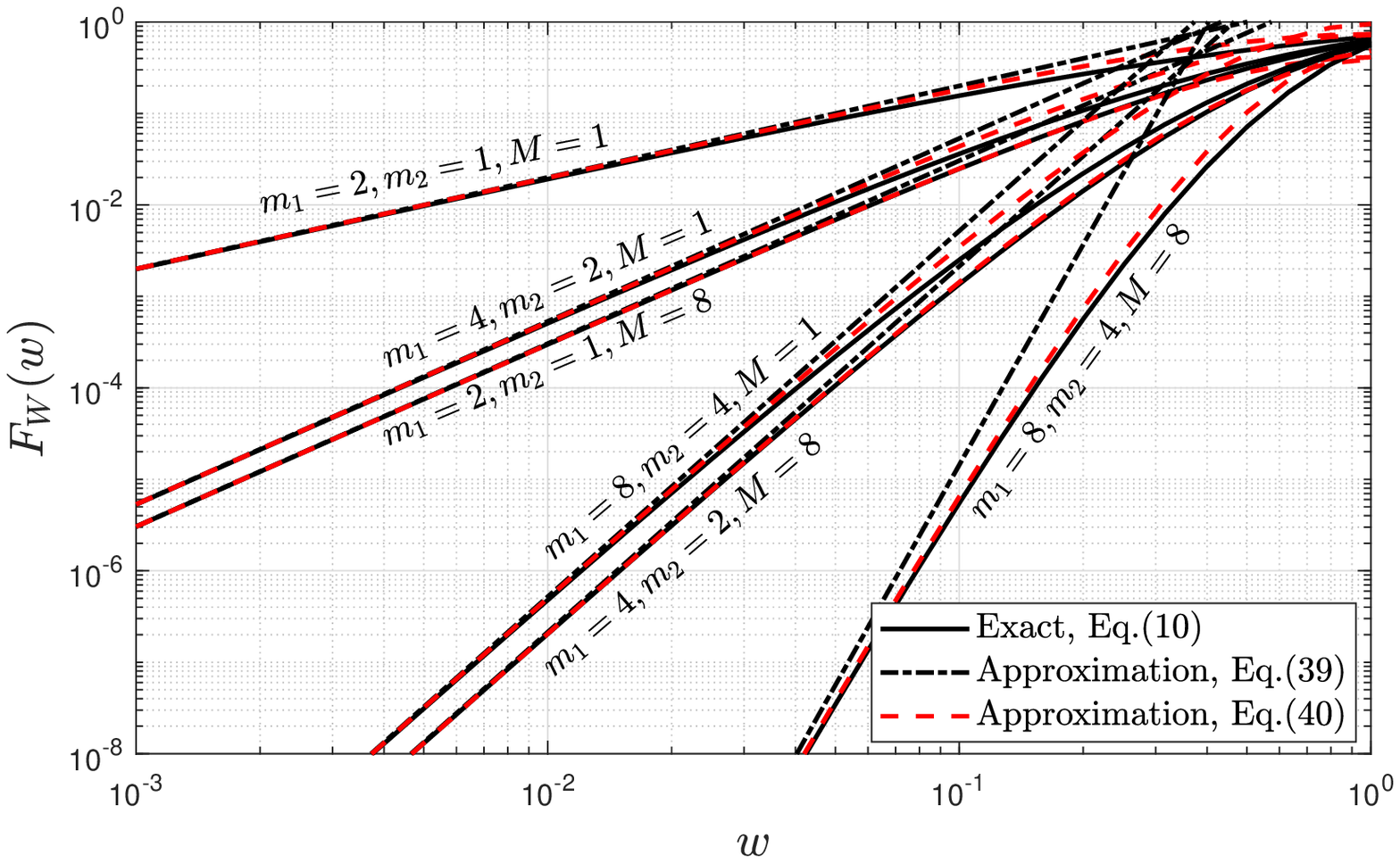}\\
	\includegraphics[width=0.47\textwidth]{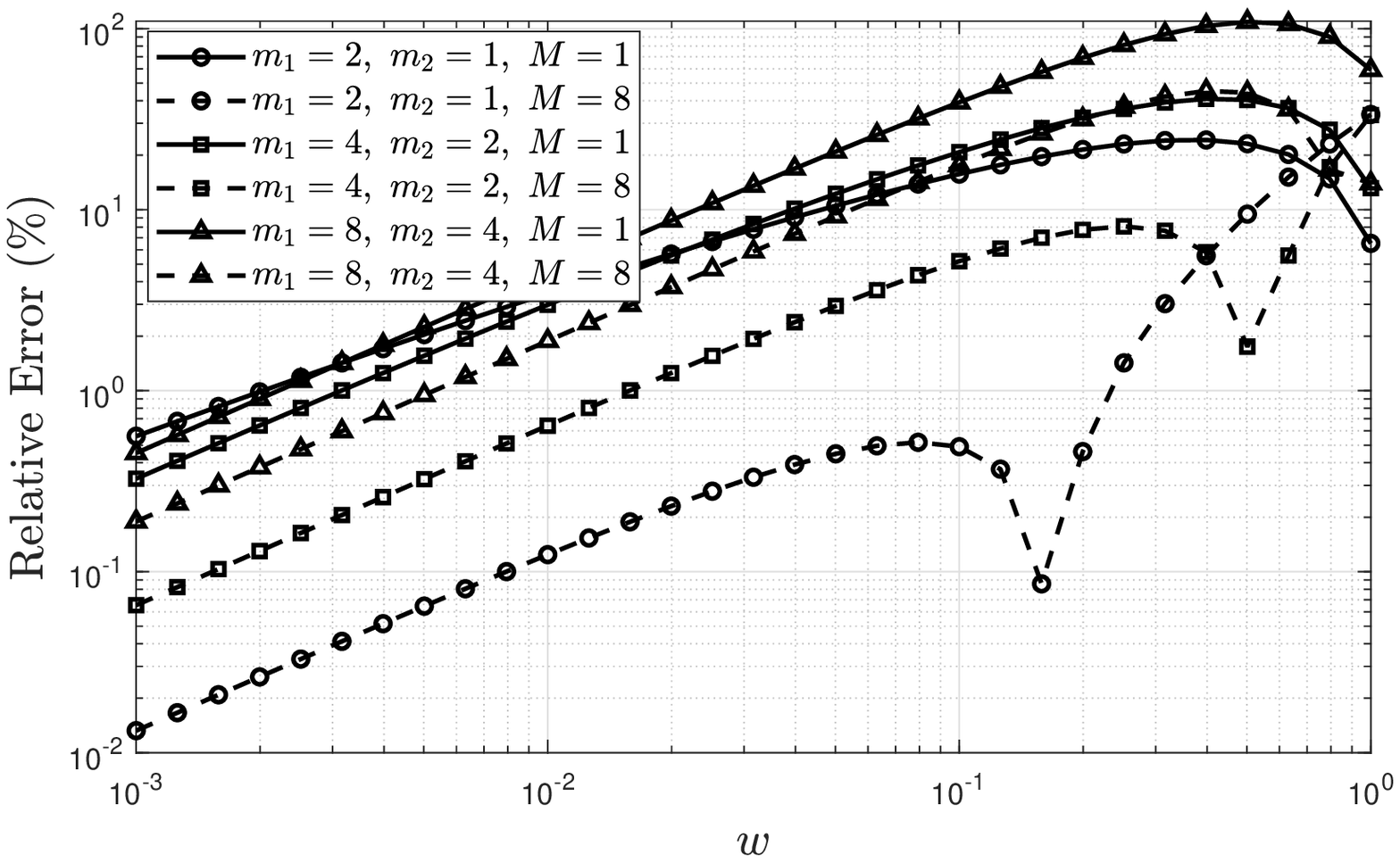}
	\caption{$(a)$ Exact and approximate expressions of $F_{W}(w)$ (top), $(b)$ Relative Error of approximating \eqref{FW} by \eqref{ap2} (bottom).}	
	\label{Fig8}
\end{figure}
 Although \eqref{ap1} is very accurate in the region $F_W(w)<10^{-1}$ for some setups (those with small $\min(m_2M,m_1)$, which is the slope of the curve in the log-log scale according to \eqref{ap1}), there are some others (those with relatively large $\min(m_2M,m_1)$) for which \eqref{ap1} is not very accurate unless for $w\rightarrow 0$. Taking advantage of the influence of the term $\min(m_2M,m_1)$ in that non-linear behavior in the log-log scale, we introduce the factor $e^{-\min(m_2M,m_1)w}$ into \eqref{ap1} yielding \eqref{ap2} for a better fitting.\footnote{The reason behind choosing the exponential function for introducing the non-linearity in the log-log scale is that this kind of function appears in the asymptotic behavior of the modified Bessel functions of second kind \cite[Sec. 10.40]{Thompson.2011}.}  Fig.~\ref{Fig8}-$(a)$ illustrates this, while in Fig.~\ref{Fig8}-$(b)$ it is shown that the relative error\footnote{Notice that the occurrence of local minimums is due to the empiric inclusion of the term $e^{-\min(m_2M,m_1)w}$ in \eqref{ap2}.} decreases quickly as $w\rightarrow 0$, specially for relatively large $M$.
\begin{align}\label{ap2}
	F_W(w) &\approx \frac{\Gamma\big(|m_2M-m_1|\big)\big(m_1m_2Mw\big)^{\min(m_2M,m_1)}}{\Gamma(m_2M)\Gamma(m_1)\min(m_2M,m_1)}\times\nonumber\\
	&\qquad\qquad\qquad\times e^{-\min(m_2M,m_1)w}.
\end{align}	
Now, let us take $\varphi=\min(m_2M,m_1)$ and $\vartheta=\frac{\Gamma\big(|m_2M-m_1|\big)(m_1m_2M)^\varphi}{\varphi \Gamma(m_2M)\Gamma(m_1)}$ and we find $F_{W}^{-1}(\epsilon_{\mathrm{th}})$ departing from \eqref{ap2} as follows
	\begin{align}\label{w1}
	\vartheta w^\varphi e^{-\varphi w}&=\epsilon_{\mathrm{th}}\nonumber\\
	-w e^{-w}&\stackrel{(a)}{=}-\Big(\frac{\epsilon_{\mathrm{th}}}{\vartheta}\Big)^{1/\varphi}\nonumber\\
	w &\stackrel{(b)}{=}-\mathcal{W}\bigg(\!\!-\!\Big(\frac{\epsilon_{\mathrm{th}}}{\vartheta}\Big)^{1/\varphi}\bigg),
	\end{align}
	where $(a)$ comes after some algebraic manipulations that include passing $\vartheta$ to the right term and taking power $1/\varphi$ in both sides, while $(b)$ follows from using the definition of the Lambert $W$ function, specifically its main branch since $w\rightarrow 0$ and $-w>-1$, which guarantees finding the appropriate real solution for the equation. Finally, substituting the expressions of $\vartheta$ and $\varphi$ into \eqref{w1} we reach \eqref{k1G}. \hfill
\qedsymbol
\section{Proof of Theorem~\ref{the_2}}\label{App_C}
Since $\delta=v+n$, we can put \eqref{k1} in the form of $k_{_{\mathrm{FTR}}}\approx n\log_2(1+(\delta-n)\chi/n)$, and its first and second derivatives in terms of $n$ are given next.
\begin{align}
\frac{d k_{_{\mathrm{FTR}}}}{d n}&\!\approx\!\bigg[\frac{-\chi\delta}{\chi(\delta\!-\!n)\!+\!n}\!+\!\ln\Big(1\!+\!\frac{\delta\!-\!n}{n}\chi\Big)\bigg]\!\log_2e,\label{d1}\\
\frac{d^2 k_{_{\mathrm{FTR}}}}{d n^2}&\!\approx-\frac{\chi^2\delta^2\log_2e}{n\big(\chi(\delta-n)+n\big)^2}.\label{d2}
\end{align}
Now, $k_{_{\mathrm{FTR}}}$ is concave on $n$ since $\frac{d^2 k_{_{\mathrm{FTR}}}}{d n^2}<0$. Based on this, and since for $n=0$ and $n=\delta$ we have that $k_{_\mathrm{FTR}}=0$, we conclude that there is an optimum $0<n^*<\delta$ and it comes from solving $\frac{d k_{_{\mathrm{FTR}}}}{d n}=0$. Using \eqref{d1} we proceed as next
\begin{align}
\frac{-\chi\delta}{\chi(\delta-n)+n}+\ln\Big(1+\frac{\delta-n}{n}\chi\Big)&=0& \nonumber\\
\frac{1-\chi-y(n)}{y(n)}+\ln y(n)&\stackrel{(a)}{=}0\nonumber\\
y(n)\big[\ln y(n)-1\big]&\stackrel{(b)}{=}\chi-1\nonumber\\
 e^{\ln y(n)-1}\big[\ln y(n)-1\big]&\stackrel{(c)}{=}\frac{\chi-1}{e}\nonumber\\
\ln y(n)-1&\stackrel{(d)}{=}\mathcal{W}\Big(\frac{\chi-1}{e}\Big)\nonumber\\ y(n)&\stackrel{(e)}{=}\frac{\chi-1}{\mathcal{W}\Big(\frac{\chi-1}{e}\Big)},
\end{align}
where $(a)$ comes from using $y(n)=1+\frac{\delta-n}{n}\chi$, $(b)$ follows from multiplying by $y(n)$ on each side and rearranging the equation, $(c)$ from using $y(n)=e^{\ln y(n)}$ and some simple algebraic transformations, $(d)$ follows immediately from the definition of the Lambert $W$ function, which is taken in the main branch since $\ln y(n)-1>-1$, and $(e)$ comes from isolating $y(n)$ and using $\mathcal{W}(x)e^{\mathcal{W}(x)}=x\rightarrow e^{\mathcal{W}(x)}=x/\mathcal{W}(x)$. Then, \eqref{noptFTR} is reached straightforward by using $n=\frac{\delta\chi}{y(n)+\chi-1}$ along with the fact that $n$ must be a positive integer. The optimum $v$ is easily obtained by using $n^*+v^*=\delta$. \hfill 	\qedsymbol
\section{Proof of Theorem~\ref{the3}}\label{App_F}
Let's define $z(h_0)=z_1(h_0)+z_2(h_0)$, where $z_1$ and $z_2$ are given in \eqref{eq}, and $z(0)=1>\epsilon_{\mathrm{th}}$, then
\begin{align}
\frac{d z(h_0)}{d h_0}&\!=\frac{d z_1(h_0)}{d h_0}+\frac{d z_2(h_0)}{d h_0}\nonumber\\
&=\frac{m_1^{m_1}h^{m_1-1}e^{-m_1h}}{\Gamma(m_1)}\bar{\epsilon}(k_0,n,h)\Big|_{h_0}+\frac{d \bar{\epsilon}(k_0,n,h_0)}{d h_0}\times\nonumber\\
&\ \  \times\frac{\Gamma(m_1,m_1h_0)}{\Gamma(m_1)}-\frac{m_1^{m_1}h_0^{m_1-1}e^{-m_1h_0}}{\Gamma(m_1)}\bar{\epsilon}(k_0,n,h_0)\nonumber\\
&=\frac{d \bar{\epsilon}(k_0,n,h_0)}{d h_0}\frac{\Gamma(m_1,m_1h_0)}{\Gamma(m_1)}\nonumber\\
&\stackrel{(a)}{=}\frac{\Gamma(m_1,m_1h_0)}{\Gamma(m_1)}\frac{d}{d h_0}F_{\mathcal{G}}\Big(\frac{\kappa}{Mh_0}\Big)\nonumber\\
&\!\stackrel{(b)}{=}\!\frac{-\Gamma(m_1,m_1h_0)}{\Gamma(m_1)\Gamma(m_2M)}\big(\kappa m_2\big)^{m_2M}h_0^{\!-\!m_2M\!-\!1}\!e^{\!-\!\frac{\kappa m_2}{h_0} }\!<\!0,\label{der}
\end{align}
where $(a)$ comes from using \eqref{eqq} with $\frac{(2^{k/n}-1)}{\psi }\frac{n}{v}=\kappa$, while $(b)$ follows from using the expression of the CDF of $\mathcal{G}$ and taking the derivative. Because $d z(h_0)/d h_0<0$, $z(h_0)$ is strictly decreasing and therefore it will intersect with the horizontal line $\epsilon_{\mathrm{th}}$ in a unique point, thus, the solution of \eqref{eq} is unique and any bracketing root finding method, e.g., bisection method, converges to it. However, for greater convergence rates, other more evolved techniques, such as the Newton's method, are recommended. Let's define $\tilde{z}(h_0)=z(h_0)-\epsilon_{\mathrm{th}}$ while we need to solve $\tilde{z}(h_0)=0$. Notice that $d\tilde{z}(h_0)/dh_0=dz(h_0)/dh_0$, and as shown in \eqref{der} it can be efficiently computed. Thus, using the recursive relation of Newton's method, given by
\begin{align}\label{Newton}
h_0^{(t+1)}=h_0^{(t)}-\frac{\tilde{z}(h_0^{(t)})}{d \tilde{z}(h_0^{(t)})/d h_0^{(t)}},
\end{align}
does not require much more computational effort.\footnote{Notice that, $\tilde{z}(x)$ is more difficult to compute than $d\tilde{z}(x)/d x$, thus, compared with bracketing methods that make use of $\tilde{z}(x)$, Newton's method involves almost the same computational effort in this case.} Parameter $t$ indicates the iteration index, and it is known that Newton's method always converges if \footnote{This result comes from Fixed Point Theory \cite{Agarwal.2001}.}
\begin{align}\label{nmu}
\mu=\Big|\frac{\tilde{z}(h_0)d^2\tilde{z}(h_0)/dh_0^2}{d\tilde{z}(h_0)/dh_0}\Big|\le K<1
\end{align} 
in the searching region with $K$ as a positive constant. Now, a suitable root finding method that uses a combination of Newton's and bisection methods for our problem setup, is presented in \textbf{Algorithm~\ref{alg1}}. 

To guarantee that the initial guess of $h_0$, $h_0^{(1)}$, satisfies \eqref{nmu}, we chose it such that 
\begin{algorithm}[t!]
	\caption{Finding $h_0$ in \eqref{eq}}
	\label{alg1}		
	\begin{algorithmic} [1]
		\State $h_0^{(0)}=0$, $h_0^{(1)}=\rho$ in \eqref{rho}, $t=1$   \label{l1}
		\While{$100\times(h_0^{(t)}-h_0^{(t-1)})/h_0^{(t-1)}>h_{_\triangle}$}   \label{l2}
		\If{$\tilde{z}(h_0^{(t)})/\tilde{z}(h_0^{(t-1)})>0$}	       \label{l3}
		\State Calculate $h_0^{(t+1)}$ according to \eqref{Newton}     \label{l4}		
		\Else
		\State Calculate $\mu$ given in \eqref{nmu}                    \label{l6}
		\If{$\mu< 1$}                                                \label{l7}
		\State Calculate $h_0^{(t+1)}$ according to \eqref{Newton}     \label{l8}		
		\Else                      
		\State $h_0{(t+1)}=(h_0^{(t)}+h_0^{(t+1)})/2$                  \label{l10}
		\EndIf		
		\EndIf 	
		\State $t\leftarrow t+1$                                       \label{l13}
		\EndWhile		              
		\State End                                                     \label{l15}
	\end{algorithmic}
\end{algorithm}
\begin{align}
\frac{d^2\tilde{z}(h_0)/dh_0^2}{d\tilde{z}(h_0)/dh_0}&=0\nonumber\\
-\frac{h_0-\kappa m_2+h_0 m_2M+\frac{m_1^{m_1}e^{-m_1h_0}h_0^{m_1+1}}{\Gamma(m_1,m_1h_0)}}{h_0^2}&\stackrel{(a)}{=}0\nonumber\\
\frac{h_0-\kappa m_2+h_0 m_2M}{h_0^2}+\frac{m_1^{m_1}e^{-m_1h_0}h_0^{m_1-1}}{\Gamma(m_1,m_1h_0)}&=0,\label{eqder}
\end{align}
where $(a)$ comes from taking the derivative of \eqref{der} and computing the quotient $\frac{d^2\tilde{z}(h_0)/dh_0^2}{d\tilde{z}(h_0)/dh_0}$. Using the fact that $h_0\ll 1$ we know that $\Gamma(m_1,m_1h_0)\approx \Gamma(m_1)$, and if additionally $m_1>1$ then we can ignore the impact of the second summand in \eqref{eqder}, thus, 
\begin{align}
h_0-\kappa m_2+h_0 m_2M=0\Rightarrow h_0\approx \frac{\kappa m_2}{1+m_2 M}.\label{h0eq}
\end{align}
Notice that $m_1>1$ is a usual characteristic of WPCNs where the WET process requires a considerable LOS component. Thus, \eqref{h0eq} holds for practical setups.
For the sake of completeness we also present the solution for $m_1=1$, which follows after \eqref{eqder} as shown next\footnote{For setups with $m_1<1$ we were not able of reaching an accurate approximation, however, those scenarios are not of practical interest.}
\begin{align}
\frac{h_0-\kappa m_2+h_0 m_2M}{h_0^2}+\frac{e^{-h_0}}{\Gamma(1,h_0)}&=0\nonumber\\
h_0^2+(m_2M+1)-\kappa m_2&\stackrel{(a)}{=}0\nonumber\\
\sqrt{\Big(\frac{m_2M+1}{2}\Big)^2+\kappa m_2}-\frac{m_2M+1}{2}&\stackrel{(b)}{=}h_0,\label{sech0}
\end{align}
where $(a)$ comes from simple algebraic manipulations after using $\Gamma(1,h_0)=e^{-h_0}$, while $(b)$ follows after solving the resulting quadratic equation and taking the positive real solution. Combining the results in \eqref{h0eq} and \eqref{sech0}, the initial guess for $h_0$ is set to
\begin{equation}\label{rho}
\rho=\left\{ \begin{array}{lc}
\frac{\kappa m_2}{1+m_2M}, &\ \mathrm{if}\   m_1>1 \\
\sqrt{\Big(\frac{m_2M\!+\!1}{2}\Big)^2\!+\!\kappa m_2}\!-\!\frac{m_2M\!+\!1}{2}, &\ \mathrm{if}\ m_1=1
\end{array}
\right.,
\end{equation}

Back to \textbf{Algorithm~\ref{alg1}}, lines \ref{l2}-\ref{l13} define the iteration cycle that stops when the required accuracy in the solution, $h_{_\triangle}$ given in percent, is attained. If $h_0^{(t)}$ evolves approaching $h_0$ without over crossing it, then it is secure to use Newton's iteration as shown in lines \ref{l3} and \ref{l4} of the algorithm. Otherwise we have to check the convergence property given in \eqref{nmu} to be sure that the next Newton's iteration will be closer to the solution. If the condition holds we use \eqref{Newton} (line \ref{l8}), otherwise it is better to average the outputs of the two latest iterations since they are bracketing the solution (line \ref{l10}), e.g., bisection rule. The global convergence is ensured. In fact, it is expected that \textbf{Algorithm~\ref{alg1}} always uses  Newton's rule at each iteration, and the bisection rule would be required in some odd cases only.

Now, for $k>k_0$ we know that $\bar{\epsilon}(k,n,h)=\epsilon_{\mathrm{th}}^*=\bar{\epsilon}(k_0,n,h_0)$, thus, based on \eqref{eqq} we have that
\begin{align}
F_{\mathcal{G}}\bigg(\frac{(2^{k/n}-1)n}{\psi v M h}\bigg)&=F_{\mathcal{G}}\bigg(\frac{(2^{k_0/n}-1)n}{\psi v M h_0}\bigg)\nonumber\\
\frac{2^{k/n}-1}{h}&=\frac{2^{k_0/n}-1}{h_0},
\end{align}
and isolating $k$ we attain \eqref{kexp}.

\bibliographystyle{IEEEtran}
\bibliography{IEEEabrv,references}
\end{document}